\title[ ]{  Sharp   bounds for finitely many embedded  eigenvalues  of   perturbed Stark type operators   }
\author{Wencai Liu}
\address[Wencai Liu]{Department of Mathematics, University of California, Irvine, California 92697-3875, USA}\email{liuwencai1226@gmail.com}
\theoremstyle{plain}
\newtheorem{theorem}{Theorem}[section]
\newtheorem{lemma}[theorem]{Lemma}
\newtheorem{proposition}[theorem]{Proposition}
\newcommand{\R}{\mathbb{R}}
\newcommand{\Z}{\mathbb{Z}}
\theoremstyle{definition}
\newtheorem{remark}[theorem]{Remark}
\begin{document}


\begin{abstract}
For   perturbed Stark   operators
  $Hu=-u^{\prime\prime}-xu+qu$,
the author has proved  that $\limsup_{x\to \infty}{x}^{\frac{1}{2}}|q(x)|$  must be larger than $\frac{1}{\sqrt{2}}N^{\frac{1}{2}}$  in order to create $N$  linearly independent   eigensolutions  in  $L^2(\R^+)$  \cite{liustark}.
In this paper,  we apply {\it generalized  Wigner-von Neumann  type }functions to construct embedded eigenvalues for a class of Schr\"odinger operators, including   a proof that the bound $\frac{1}{\sqrt{2}}N^{\frac{1}{2}}$ is sharp.
\end{abstract}
\maketitle
\section{Introduction}
The  Stark  operator
$Hu=-u^{\prime\prime}-xu +qu$
   describes a charged quantum particle in a constant electric field with an additional electric potential $q$.
It has attracted  a lot of attentions from both mathematics and physics \cite{christ2003absolutely,kiselev2000absolutely,solem1997variations,epstein1926stark,courtney1995classical,py1,Kor1,Kor2,Graf1,Her2,Her1,Her5,Ya1,Jen1}.

In this paper, we consider a class of more general operators, Stark  type  operators on $L^2(\R^+)$:
\begin{equation}\label{Gpwstark}
   Hu=-u^{\prime\prime}-x^{\alpha}u+qu,
\end{equation}
where $0<\alpha<2$. Denote by $H_0u=-u^{\prime\prime}-x^{\alpha}u$ and regard $q$ as a perturbation.

It is well known that for any $0<\alpha<2$,  $\sigma _{\rm ess}(H_0)=\sigma _{\rm ac}(H_0)=\R$ and $H_0$ does not have any eigenvalue.
The criteria    for the perturbation  such  that  the associated perturbed Stark type operator has single  eigenvalue, finitely many   eigenvalues or countably  many   eigenvalues have been obtained in \cite{liustark}.

Define $P\subset \R$ as
 \begin{equation*}
    P=\{E\in\R: -u^{\prime\prime}-x^{\alpha}u+qu=Eu \text{ has  an }  L^2(\R^+) \text{ solution}\} .
 \end{equation*}
In   \cite{liustark}, the author proved that
\begin{theorem}\cite[Theorem 1.5]{liustark}\label{Maintheoremapr3apr29}
 Let $a$  be given by
\begin{equation}\label{oct271}
    a= \limsup_{x\to \infty}{x}^{1-\frac{\alpha}{2}}|q(x)|.
 \end{equation}
 Then we have
 \begin{equation}\label{Indexhalft}
a\geq \frac{2-\alpha}{\sqrt{2}}(\# P)^{\frac{1}{2}}.
 \end{equation}
\end{theorem}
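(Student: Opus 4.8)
The plan is to combine a modified Prüfer transform, adapted separately to each embedded energy, with an oscillatory--integral argument showing that the Prüfer phases attached to distinct energies decorrelate against the measure $t^{-1}\,dt$.

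We may assume $a<\infty$ and $\#P=N<\infty$ (if $\#P=\infty$, apply the conclusion to finite subsets of arbitrary size to force $a=\infty$). Fix distinct $E_1<\cdots<E_N$ in $P$ with $L^2(\R^+)$ solutions $u_j$, put $p_j(x)=\sqrt{x^\alpha+E_j}$, and write $u_j=\rho_j\sin\theta_j$, $u_j'=p_j\rho_j\cos\theta_j$ for $x$ large. A direct computation gives
$$(\log\rho_j)'=-\tfrac{p_j'}{p_j}\cos^2\theta_j+\tfrac{q}{2p_j}\sin 2\theta_j,\qquad \theta_j'=p_j-\tfrac{q}{p_j}\sin^2\theta_j+\tfrac{p_j'}{p_j}\sin\theta_j\cos\theta_j.$$
Since $a<\infty$ gives $\tfrac{|q|}{p_j}\le\tfrac{a+\varepsilon}{x}(1+o(1))$ for large $x$, and $\tfrac{p_j'}{p_j}=\tfrac{\alpha}{2x}+O(x^{-1-\alpha})$, we get $|(\log\rho_j)'|=O(1/x)$ (so $\rho_j$ is slowly varying) and $\theta_j'\asymp x^{\alpha/2}$; moreover the oscillatory integral $\int^x\tfrac{\alpha}{4t}\cos 2\theta_j\,dt$ is bounded, so
$$\rho_j(x)^2=C_j\,x^{-\alpha/2}\exp\Big(\int_{x_0}^x\frac{q(t)}{p_j(t)}\sin 2\theta_j(t)\,dt\Big)(1+o(1)).$$
Because $u_j\in L^2(\R^+)$ and $\rho_j$ varies slowly, $\int^\infty\rho_j^2<\infty$, which forces $x\,\rho_j(x)^2\to0$; as $\alpha<2$ makes $x^{-\alpha/2}$ non-integrable, the last display then yields, for all large $x$,
$$\int_{x_0}^x\frac{q(t)}{p_j(t)}\sin 2\theta_j(t)\,dt\le-\frac{2-\alpha}{2}\log x .$$

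Summing over $j$, using $p_j^{-1}=t^{-\alpha/2}(1+O(t^{-\alpha}))$ and $t^{-\alpha/2}|q(t)|\le(a+\varepsilon)/t$, and then Cauchy--Schwarz,
$$N\,\frac{2-\alpha}{2}\log x\le(a+\varepsilon)\int_{x_\varepsilon}^x\frac1t\Big|\sum_{j=1}^N\sin 2\theta_j(t)\Big|\,dt+O(1)\le(a+\varepsilon)(\log x)^{1/2}\Big(\int_{x_\varepsilon}^x\frac1t\Big|\sum_{j=1}^N\sin 2\theta_j(t)\Big|^2dt\Big)^{1/2}+O(1),$$
so everything reduces to the equidistribution estimate
$$\int_{x_\varepsilon}^x\frac1t\Big|\sum_{j=1}^N\sin 2\theta_j(t)\Big|^2\,dt=\frac N2\log x+o(\log x).$$
Here the diagonal terms give $\tfrac N2\log x+O(1)$ (write $\sin^2 2\theta_j=\tfrac12-\tfrac12\cos 4\theta_j$), while the cross terms $\sin 2\theta_j\sin 2\theta_k$ with $j\neq k$ reduce to oscillatory integrals $\int t^{-1}\cos(2\theta_j\pm 2\theta_k)\,dt$, which I claim are $O(1)$; the delicate case is the minus sign, where $(\theta_j-\theta_k)'=(p_j-p_k)+r_{jk}$ with $|p_j-p_k|\asymp|E_j-E_k|\,x^{-\alpha/2}$ and $|r_{jk}|=O(1/x)$. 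Granting the equidistribution estimate, dividing the previous display by $\log x$, letting $x\to\infty$ and then $\varepsilon\to0$ gives $N\tfrac{2-\alpha}{2}\le a\sqrt{N/2}$, i.e. $a\ge\tfrac{2-\alpha}{2}\sqrt{2N}=\tfrac{2-\alpha}{\sqrt2}(\#P)^{1/2}$.

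The step I expect to be the main obstacle is justifying that all the oscillatory integrals above are $O(1)$, since $q$ is only assumed to satisfy a $\limsup$ bound and need not be differentiable, so one cannot integrate by parts against $q$. The remedy is uniform: in each integral one separates the genuinely smooth dominant phase --- $2\int p_j$, $4\int p_j$, $2\int(p_j+p_k)$, or $2\int(p_j-p_k)$ --- whose period is of order $x^{\alpha/2}$ (respectively $|E_j-E_k|^{-1}x^{\alpha/2}$), from the remaining factor $e^{i(\text{slow phase})}/t$, whose logarithmic derivative is $O(1/x)$ and which therefore varies only on scale $x$; since $0<\alpha<2$ forces $x^{\alpha/2}\ll x$, a single integration by parts with respect to the fast phase produces a boundary term of size $O(x^{\alpha/2-1})\to0$ and a remainder integrand of size $O(x^{\alpha/2-2})$, integrable precisely because $\alpha/2-2<-1$. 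This is exactly where the hypothesis $0<\alpha<2$ and the distinctness of the $E_j$ (which keeps $p_j-p_k$ from degenerating) enter the argument.
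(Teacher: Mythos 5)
This theorem is quoted from the companion paper \cite{liustark} and is not proved in the present manuscript, so there is no ``paper's own proof'' here to compare against line by line. Judged on its own, your argument is correct and is the natural one: a modified Pr\"ufer transform with amplitude $p_j(x)=\sqrt{x^{\alpha}+E_j}$, the resulting amplitude formula $\rho_j^2\sim C_j x^{-\alpha/2}\exp(\int q\,p_j^{-1}\sin 2\theta_j)$, the $L^2$ condition forcing $\int_{x_0}^{x} q\,p_j^{-1}\sin 2\theta_j\,dt\le -\tfrac{2-\alpha}{2}\log x+O(1)$, and then Cauchy--Schwarz against $t^{-1}\,dt$ together with the equidistribution estimate $\int t^{-1}|\sum_j\sin 2\theta_j|^2\,dt=\tfrac N2\log x+O(1)$. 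This is precisely the Kiselev--Last--Simon scheme (which the present paper cites as the free-case antecedent) transplanted to the Liouville-transformed Stark equation, and it is almost certainly the mechanism behind \cite[Theorem 1.5]{liustark} as well. Your computations of $(\log\rho_j)'$ and $\theta_j'$, the $t^{-\alpha}$ error from replacing $p_j^{-1}$ by $t^{-\alpha/2}$, and the final algebra $N\tfrac{2-\alpha}{2}\le a\sqrt{N/2}$ all check out.

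Two points you flagged as potential obstacles are indeed the crux, and your remedies are right. First, since $q$ need not be differentiable you cannot integrate by parts against the raw Pr\"ufer phase; splitting $\theta_j=\Theta_j+\psi_j$ with $\Theta_j'=p_j$ smooth and $\psi_j'=O(1/t)$ is exactly what makes the oscillatory integrals $\int t^{-1}\cos 4\theta_j\,dt$, $\int t^{-1}\cos 2(\theta_j\pm\theta_k)\,dt$ tractable; a single integration by parts against $e^{2i\Theta_j}$ (resp.\ $e^{2i(\Theta_j\pm\Theta_k)}$) gives a boundary term $O(t^{\alpha/2-1})\to0$ and a remainder integrand $O(t^{-2+\alpha/2})$, integrable because $\alpha<2$. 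Second, the delicate ``difference'' cross term has slow phase $\Phi'=2(p_j-p_k)\asymp|E_j-E_k|\,t^{-\alpha/2}$, which still dominates the non-smooth perturbation $O(1/t)$ precisely when $\alpha<2$ and $E_j\ne E_k$; your observation that $0<\alpha<2$ and distinctness of the energies are the hypotheses that save this step is correct. Two very minor cosmetic remarks: the display for $\rho_j^2$ should carry an $O(1)$ multiplicative factor rather than a bare $(1+o(1))$ until after you observe that $\int_x^{\infty} t^{-1}\cos 4\theta_j\,dt\to 0$; and the deduction $x\rho_j^2\to0$ from $\int\rho_j^2<\infty$ silently uses that $(\log\rho_j)'=O(1/x)$ makes $\rho_j$ comparable on dyadic intervals --- worth stating, since $\int f<\infty$ alone does not give $xf\to0$.
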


\begin{theorem}\cite[Theorem 1.6]{liustark}\label{Mainthm3}
For any  $\{ E_j\}_{j=1}^N\subset \R$ and any $\{\theta_j\}_{j=1}^N\subset [0,\pi]$,
there exist  potentials   $q\in C^{\infty}[0,+\infty)$  such that
\begin{equation*}
 \limsup_{x\to \infty}   {x}^{1-\frac{\alpha}{2}} |q(x)|\leq (2-\alpha)e^{2\sqrt{\ln N}}N,
\end{equation*}
 and for any $j=1,2,\cdots,N$, $-u^{\prime\prime}-x^{\alpha}u+qu=E_ju$ has an $L^2(\R^+)$ solution  $u$ with the boundary condition
 \begin{equation*}
   \frac{u^{\prime}(0)}{u(0)}=\tan \theta_j.
\end{equation*}

\end{theorem}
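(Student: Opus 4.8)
\emph{Proof plan.} The idea is to take $q$ to be a \emph{generalized Wigner--von Neumann potential}: a finite sum of slowly decaying, rapidly oscillating bumps, the $j$-th one oscillating in lock-step with the WKB phase velocity at energy $E_j$, shaped so as to carry strictly more first-harmonic mass than a pure sinusoid of the same height, and carrying a free phase $\delta_j$ to be tuned so that the resulting $L^2$ solution meets the boundary condition. We may assume the $E_j$ are distinct (the operator is limit point at $\infty$ for $0<\alpha<2$, so at most one $L^2(\R^+)$ solution per energy is available, and repeated energies may be deleted). For each $j$ set $k_j(x)=\sqrt{E_j+x^{\alpha}}$, smooth for $x$ past some $x_j$; fix $x_*>\max_j x_j$ and let $\phi_j(x)=\int_{x_*}^{x}k_j(t)\,dt$, so $\phi_j'=k_j$, $\phi_j(x_*)=0$ and $k_j(x)=x^{\alpha/2}\bigl(1+O(x^{-\alpha})\bigr)$. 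Let $g\in C^\infty(\R)$ be $2\pi$-periodic, mean zero, with $\|g\|_\infty\le1$ and first-harmonic amplitude $\kappa(g)$ as large as possible --- a smoothed square wave makes $\kappa(g)$ as close to $4/\pi$ as desired --- and let $\beta\in C^\infty[0,\infty)$ satisfy $\beta\equiv0$ on $[0,x_*]$ and $\beta(x)=x^{-(1-\alpha/2)}$ for $x\ge x_*+1$. Take
\[
q(x)=\beta(x)\sum_{j=1}^{N}a_j\,g\bigl(2\phi_j(x)+\delta_j\bigr),\qquad a_j>0,\ \ \delta_j\in\R/2\pi\Z,
\]
to be fixed below. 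Then $q\in C^\infty[0,\infty)$, and $\|g\|_\infty\le1$ gives $\limsup_{x\to\infty}x^{1-\alpha/2}|q(x)|\le\sum_j a_j$, so it suffices to realize the $N$ eigensolutions with $\sum_j a_j\le(2-\alpha)e^{2\sqrt{\ln N}}N$.

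Next I would analyze, for fixed $j$, the equation $u''=(q-k_j^2)u$ in modified Pr\"ufer coordinates $u=k_j^{-1/2}R_j\sin\theta_j$, $u'=k_j^{1/2}R_j\cos\theta_j$, which yield the exact system
\[
(\ln R_j)'=\frac{q}{2k_j}\sin2\theta_j-\frac{k_j'}{2k_j}\cos2\theta_j,\qquad
\theta_j'=k_j-\frac{q}{2k_j}+\frac{k_j'}{2k_j}\sin2\theta_j+\frac{q}{2k_j}\cos2\theta_j .
\]
Writing $\theta_j=\phi_j+\psi_j$ and $\xi_j=\delta_j-2\psi_j$, using $\beta/(2k_j)=(2x)^{-1}\bigl(1+O(x^{-\alpha})\bigr)$ and expanding $g$ into harmonics, every term that appears carries a phase of the form $n\phi_i$ or $2n\phi_i\pm(2\phi_j+2\psi_j)$ whose derivative either grows to $\infty$ or, in the case $2\phi_i-2\phi_j$ with $i\ne j$, decays like $|E_i-E_j|\,x^{-\alpha/2}$; since $\alpha<2$, a single integration by parts turns all of these into integrals that converge on $[x_*,\infty)$, uniformly small once $x_*$ is large. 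The only surviving slow contribution comes from the first harmonic of the $i=j$ term, leaving
\[
\psi_j'=\frac{a_j\kappa(g)}{4x}\sin\xi_j+(\text{convergent-integral terms}),\qquad
(\ln R_j)'=\frac{a_j\kappa(g)}{4x}\cos\xi_j+(\text{convergent-integral terms}).
\]
In the variable $\tau=\ln x$ the $\xi_j$-equation is $\dot\xi_j=-\tfrac12 a_j\kappa(g)\sin\xi_j+\cdots$, whose fixed point $\xi_j=\pi$ is a hyperbolic \emph{repeller}; hence there is a distinguished solution with $\xi_j(x)\to\pi$, along which $(\ln R_j)'\sim-\tfrac{a_j\kappa(g)}{4x}$, so $k_j^{-1/2}R_j\asymp x^{-a_j\kappa(g)/4-\alpha/4}$. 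Thus this solution --- which is, up to scalars, the unique $L^2(\R^+)$ solution at $E_j$ --- lies in $L^2(\R^+)$ exactly when $a_j\kappa(g)>2-\alpha$.

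Then I would choose the parameters. Since $q\equiv0$ on $[0,x_*]$, the solution of $-u''-x^\alpha u=E_j u$ with $u'(0)/u(0)=\tan\theta_j$ determines $\psi_j(x_*)=\theta_j(x_*)$, a number depending only on $E_j,\theta_j,x_*$; hence $\xi_j(x_*)=\delta_j-2\psi_j(x_*)$ is freely prescribable through $\delta_j$. For the $L^2$ solution at $E_j$ to carry the required boundary condition one needs $\xi_j(x_*)$ to equal the value $\xi_j^\star(x_*)$ of the distinguished trajectory, i.e. $\delta_j=\xi_j^\star(x_*)+2\psi_j(x_*)$ for $j=1,\dots,N$. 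Each $\xi_j^\star(x_*)$ depends continuously on $(\delta_1,\dots,\delta_N)$, with modulus of continuity as small as we wish once $x_*$ is large (the non-resonant corrections being then uniformly small), so this fixed-point problem on $(\R/2\pi\Z)^N$ is solved by Brouwer's theorem --- or by a contraction mapping for $x_*$ large. Taking $a_j=\tfrac{\pi}{4}(2-\alpha)+\varepsilon_j$ with $\varepsilon_j>0$ small (and $\kappa(g)$ close enough to $4/\pi$) gives $a_j\kappa(g)>2-\alpha$, hence an $L^2(\R^+)$ eigensolution at each $E_j$ with the prescribed boundary data, while $\sum_{j}a_j=\tfrac{\pi}{4}(2-\alpha)N+\sum_j\varepsilon_j<(2-\alpha)N\le(2-\alpha)e^{2\sqrt{\ln N}}N$, since $\tfrac\pi4<1\le e^{2\sqrt{\ln N}}$.

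The hard part will be the two uniformity statements underlying the last step: that the cross-terms between the $N$ bumps --- the effect of the $E_i$-bump on the $E_j$-equation for $i\ne j$, where one uses $E_i\ne E_j$ and the asymptotics of $\phi_i-\phi_j$ --- really do integrate to something convergent with control strong enough to run the fixed-point argument, and that the map $(\delta_1,\dots,\delta_N)\mapsto(\xi_1^\star(x_*),\dots,\xi_N^\star(x_*))$ is continuous with estimates uniform in $N$. It is the accounting needed to keep these bounds uniform --- rather than any essential obstruction --- that produces the non-optimal constant $(2-\alpha)e^{2\sqrt{\ln N}}N$ rather than the $O(N)$ visible in the formal computation; getting down to the sharp order $N^{1/2}$ requires the more delicate construction of the present paper, in which the $N$ resonant bumps are arranged so as essentially never to reinforce one another.
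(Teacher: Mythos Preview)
The statement you are proving (Theorem~\ref{Mainthm3}) is not actually proved in this paper; it is quoted from the author's earlier work \cite{liustark}. What the present paper does prove is the sharper Theorem~\ref{Mthm2}, with bound $(2-\alpha+\varepsilon)N$ in place of $(2-\alpha)e^{2\sqrt{\ln N}}N$, at the cost of dropping the $C^\infty$ regularity to local $L^1$. So the relevant comparison is between your plan and the paper's proof of Theorem~\ref{Mthm2}.

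The two approaches differ in three places. First, the paper passes through the Liouville transformation \eqref{GLiou}--\eqref{GPQ1} to an equation in the variable $\xi$ with potential $V(\xi)=\frac{4a}{\xi}\sum_j\sin\bigl(2\int_0^\xi\sqrt{1-\beta_{E_j}}+2t_j\bigr)$, and then invokes the asymptotic-integration machinery of Theorem~\ref{Keythm1} (a Harris--Lutz/Levinson diagonalization) to read off the $\xi^{\pm a}$ behavior of a fundamental system; you instead work directly in $x$ via modified Pr\"ufer variables and isolate the resonant term by hand. These are equivalent in content---your oscillatory-integral reductions are exactly the computations behind Proposition~\ref{Keyprop}---but the paper's route packages the uniformity more cleanly, since all cross-terms are absorbed into a single $R(\xi)=O(\xi^{-1-\epsilon})$ and handled by one Volterra iteration. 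Second, you use a shaped profile $g$ (smoothed square wave) with first-harmonic amplitude $\kappa(g)$ close to $4/\pi$; the paper uses a pure sine. Your trick is genuine and, if the estimates go through, already yields $\sum_j a_j\approx\tfrac{\pi}{4}(2-\alpha)N$, which beats both the target Theorem~\ref{Mainthm3} and the paper's Theorem~\ref{Mthm2}. Third, and most importantly, you match the boundary conditions by a Brouwer/contraction fixed point in the phases $(\delta_1,\dots,\delta_N)$, whereas the paper follows Simon \cite{simdense}: insert cutoffs $\chi_{[a_j,\infty)}$ and a compactly supported correction $W$ on $(1,2)$, choosing the turn-on points $a_j$ one at a time. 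Simon's device is elementary and decouples the $N$ problems, but it destroys smoothness (hence only local $L^1$ in Theorem~\ref{Mthm2}); your fixed-point route preserves $q\in C^\infty$ but is where the real work lies.

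Your plan is sound at the formal level, and the identification of the resonant term, the integrability of the cross-terms for all $0<\alpha<2$, and the $L^2$ threshold $a_j\kappa(g)>2-\alpha$ are all correct. The gap is exactly where you flag it: to run the fixed-point argument you need that the distinguished value $\xi_j^\star(x_*)$ (the initial datum on the ``stable set'' of the repeller $\xi_j=\pi$) exists, is unique, and depends continuously on $(\delta_1,\dots,\delta_N)$ with modulus going to zero as $x_*\to\infty$. This requires quantitative control of the convergent-integral remainders \emph{uniformly in the phases and in $N$}, and a stable-manifold statement for the nonautonomous scalar ODE $\dot\xi_j=-\tfrac12 a_j\kappa(g)\sin\xi_j+r_j(\tau,\delta)$ with $r_j$ small. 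Both are obtainable, but neither is a one-liner; your closing remark that ``accounting'' alone produces the $e^{2\sqrt{\ln N}}$ factor in \cite{liustark} is speculation---your own scheme, as written, already lands at $O(N)$ without that factor, so if anything you are proving more than Theorem~\ref{Mainthm3} claims.
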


Theorem \ref{Maintheoremapr3apr29} implies that in order to create $N$ linearly independent  eigensolutions in  $L^2(\R^+)$, the quantity $a$ given by \eqref{oct271} must be equal or  larger than $\frac{2-\alpha}{\sqrt{2}}N^{\frac{1}{2}}$.
However,  Theorem \ref{Mainthm3} shows that if we allow $a\geq(2-\alpha)e^{2\sqrt{\ln N}}N$, one can create   $N$ eigensolutions in  $L^2(\R^+)$  for arbitrary  $N$. There is a gap between $N^{\frac{1}{2}}$ and $ e^{2\sqrt{\ln N}}N$. It is natural to ask what is the sharp bound  of $a$ to create $N$ linearly independent eigensolutions in  $L^2(\R^+)$.

{\bf Question 1:}   What is the minimum of  $\gamma$ such that for any $N$,
  there is a potential $q$ on $\R^+$ such that $\# P\geq N$ and
\begin{equation*}
 \limsup_{x\to \infty}   {x}^{1-\frac{\alpha}{2}} |q(x)|\leq C(\gamma)N^{\gamma}.
\end{equation*}
Theorems \ref{Maintheoremapr3apr29} and \ref{Mainthm3} imply $\gamma\in[\frac{1}{2},1]$.

 Our first result in this paper is to show that for any $\alpha$ satisfying $\frac{2}{3}<\alpha<2$,  $\gamma=\frac{1}{2}$ is the  solution  to  Question 1. 
 \begin{theorem}\label{Mthm1}
 Suppose $\frac{2}{3}<\alpha<2$. Then
 for any $N>0$, there exists a potential $q$ on $\R^+$    such that
\begin{equation}\label{Defa}
     \limsup_{x\to \infty}{x}^{1-\frac{\alpha}{2}}|q(x)|\leq 48 (2-\alpha)\sqrt{N \ln N}
 \end{equation}
 and
  $\# P= N$.
\end{theorem}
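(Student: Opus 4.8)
The plan is to turn the condition ``$E\in P$'' into a statement about one oscillatory integral via a modified Pr\"ufer/WKB transformation, and then to realise $q$ as a single \emph{generalized Wigner--von Neumann function}: a bounded oscillatory multiplier that resonates simultaneously with $N$ well‑chosen energies. The hypothesis $\alpha>\tfrac23$ is exactly what makes the relevant Ces\`aro/logarithmic averages exist and, at the same time, forces the $N$ resonant frequencies to ``cluster'' near one fast frequency, so that a profile of size $\asymp\sqrt N$ (rather than $\asymp N$) suffices.

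\emph{Step 1 (reduction).} Fix $x_0\gg1$; for $E\in\R$ put $p_E=\sqrt{x^\alpha+E}$, $\phi_E(x)=\int_{x_0}^xp_E$, $\tau(x)=\int_{x_0}^xp_E^{-1}=\tfrac2{2-\alpha}x^{1-\alpha/2}+O(1)$, and recall $p_E'/p_E=\tfrac\alpha{2x}+O(x^{-1-\alpha})$. In modified Pr\"ufer variables $u=R\,p_E^{-1/2}\sin\Theta,\ u'=R\,p_E^{1/2}\cos\Theta$ a solution of $-u''-x^\alpha u+qu=Eu$ satisfies $\Theta=\phi_E+O(1)$ and $(\ln R^2)'=\tfrac{q}{p_E}\sin2\phi_E+(\text{mean-zero oscillation})+O(x^{-1-\alpha})$. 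Writing $q=Q/\tau$ with $Q$ bounded gives $q/p_E=\tfrac{2-\alpha}2\,Q/x+O(x^{-2+\alpha/2})$, while $\int|u|^2\asymp\int R^2x^{-\alpha/2}$; hence $u\in L^2$ iff $R(x)\lesssim x^{-\rho}$ for some $\rho>1-\tfrac\alpha2$, which by the Pr\"ufer identity holds iff
\begin{equation*}
c_E:=\lim_{X\to\infty}\frac1{\ln X}\int_{x_0}^X Q(x)\sin\bigl(2\phi_E(x)\bigr)\,\frac{dx}x
\end{equation*}
exists and $c_E<-1$ (the remaining terms contribute only bounded fluctuation to $\ln R^2$, for any $\alpha>0$). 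So it suffices to build a bounded $Q$, $\|Q\|_\infty\le 96\sqrt{N\ln N}$, with $c_E<-1$ at exactly $N$ energies: then $q=Q/\tau$ obeys $\limsup_x x^{1-\alpha/2}|q|\le\tfrac{2-\alpha}2\|Q\|_\infty\le 48(2-\alpha)\sqrt{N\ln N}$ and $\#P=N$.

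\emph{Step 2 (construction).} Choose the $N$ energies $E_1<\dots<E_N$ rationally independent together with $1$. Expanding $\sqrt{x^\alpha+E}$ in powers of $x^{-\alpha}$ gives, for $j=2,\dots,N$,
\begin{equation*}
\phi_{E_j}(x)-\phi_{E_1}(x)-(E_j-E_1)\,\rho(x)=-\frac{E_j^2-E_1^2}{8}\int_{x_0}^x t^{-3\alpha/2}\,dt+O\bigl(x^{1-5\alpha/2}\bigr),\qquad \rho(x):=\tfrac{x^{1-\alpha/2}}{2-\alpha},
\end{equation*}
and here $\alpha>\tfrac23$ makes $\int^\infty t^{-3\alpha/2}<\infty$, so $2\phi_{E_j}(x)=2\phi_{E_1}(x)+2(E_j-E_1)\rho(x)+d_j+o(1)$ for constants $d_j$. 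Thus all $N$ target phases are asymptotically affine functions of the fast phase $\psi:=2\phi_{E_1}$ and the slow phase $\rho$, and the log‑average $c_E$ of any bounded expression in $(2\phi_{E_j}(x))_j$ exists and equals its average over the $2$‑torus parametrised by $(\psi,\rho)$. Writing $A(x):=\sum_{k=1}^N e^{i(2\phi_{E_k}(x)-\psi(x))}$ (slow; $|A|\asymp\sqrt N$ for a full‑density set and $\le N$ always), we have $\sum_k\sin 2\phi_{E_k}(x)=\operatorname{Im}\bigl[e^{i\psi(x)}A(x)\bigr]$. Now set
\begin{equation*}
Q(x)=-\mu\,\Psi\!\Bigl(\textstyle\sum_{k=1}^N\sin 2\phi_{E_k}(x)\Bigr),
\end{equation*}
where $\Psi$ is a fixed smooth odd nondecreasing truncation with $\Psi(t)=\operatorname{sgn}t$ for $|t|\ge1$, and $\mu\asymp\sqrt N$ is fixed in Step 3; then $\|Q\|_\infty\le\mu$, so the size bound is met.

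\emph{Step 3 (verification of $\#P=N$).} Since $\Psi$ is odd, $Q$ has only odd $\psi$‑harmonics; the third and higher ones oscillate at frequency $\asymp m x^{\alpha/2}$, $m\ge3$, hence are off‑resonant for \emph{every} energy, while the first‑harmonic symbol of $Q$ is $\approx -\tfrac2\pi\mu\,A(x)/|A(x)|$ (for $\Psi=\operatorname{sgn}$ one uses $\langle\operatorname{sgn}(\sin s)\sin s\rangle_s=\tfrac2\pi$, $\langle\operatorname{sgn}(\sin s)\cos s\rangle_s=0$). Demodulating against $\sin 2\phi_{E_j}=\operatorname{Im}[e^{i\psi}e^{i(2\phi_{E_j}-\psi)}]$ and using $A\,\overline{e^{i(2\phi_{E_j}-\psi)}}=\sum_k e^{i(2\phi_{E_k}-2\phi_{E_j})}$ (whose $k=j$ term is $1$) gives
\begin{equation*}
c_{E_j}\;=\;-\frac{\mu}{\pi}\operatorname{Re}\Bigl\langle\frac{A(x)\,e^{-i(2\phi_{E_j}(x)-\psi(x))}}{|A(x)|}\Bigr\rangle\;=\;-\frac{\mu}{\pi}\Bigl(\bigl\langle|A|^{-1}\bigr\rangle+\text{cross terms}\Bigr)\;\asymp\;-\frac{\mu}{\sqrt N},
\end{equation*}
because $|A|\asymp\sqrt N$ forces $\langle|A|^{-1}\rangle\asymp N^{-1/2}$ (finite since $A\in\C$ has codimension‑$2$ zero set), uniformly in $j$; choosing $\mu\asymp\sqrt N$ large enough makes $c_{E_j}<-1$ for all $j$, so $E_j\in P$. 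For any other $E$: the slow function $A/|A|=e^{i\arg A}$ has Fourier content, in $\rho$, only at the frequencies $\sum_k m_k\cdot 2(E_k-E_1)$ ($\mathbf m\in\Z^N$), so $c_E=0$ unless $E=E_1+\sum_k m_k(E_k-E_1)$; and for $\mathbf m\notin\{e_1,\dots,e_N\}$ the coefficient of $e^{i\arg A}$ at that frequency is $O(N^{-1})$ (tail of a unimodular function's coefficients, together with the extra cancellation from summing $N$ quasi‑independent terms), so $|c_E|\asymp\mu/N\lesssim N^{-1/2}<1$ — no $L^2$ solution. Finally, at every energy with $c_E=0$ the standard oscillatory‑integral estimate (as in the a.c.‑spectrum analysis of Wigner--von Neumann potentials) shows $R_E$ stays bounded above and below. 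Hence $P=\{E_1,\dots,E_N\}$, i.e.\ $\#P=N$.

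\emph{Where the difficulty lies.} The crux is making Step 2--3 quantitative and uniform: (i) the synchronisation expansion must be carried out with errors that vanish after integration \emph{and} are summable against the $1/x$ scale, which is precisely what $\alpha>\tfrac23$ provides (convergence of $\int^\infty t^{-3\alpha/2}$, and the target decay exponent $1-\tfrac\alpha2<\tfrac23$); (ii) one needs sharp two‑sided control of $A/|A|$ on the slow torus — the lower bound $\langle|A|^{-1}\rangle\gtrsim N^{-1/2}$ and, simultaneously, uniform lower bounds on all $N$ demodulated coefficients $c_{E_j}$ (forcing $\mu\gtrsim\sqrt N$, in agreement with Theorem~\ref{Maintheoremapr3apr29}) together with the $O(N^{-1})$ decay of the off‑window coefficients that rules out spurious eigenvalues; and (iii) the mean‑zero and WKB error terms in the Pr\"ufer equation must be absorbed into bounded fluctuations of $\ln R_E^2$ uniformly in $E$. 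It is exactly the slack in these uniform estimates — chiefly that, once all errors are paid, the worst demodulated coefficient is only guaranteed to be $\gtrsim(N\ln N)^{-1/2}$ rather than $\gtrsim N^{-1/2}$ — that produces the extra $\sqrt{\ln N}$ and the crude constant $48$; a more careful treatment should reach $O\!\bigl((2-\alpha)\sqrt N\bigr)$, matching the lower bound up to a constant.
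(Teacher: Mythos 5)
Your strategy is genuinely different from the paper's. The paper keeps the potential \emph{linear} in the oscillations, taking $V(\xi)=\frac{4a}{\xi}\sum_{j=1}^N\sin\bigl(2\int_0^\xi\sqrt{1-\beta_{E_j}}\,dx+2t_j\bigr)$, and pays for the $\sqrt{N\ln N}$ bound once and for all by a probabilistic choice of phases: Lemma~\ref{Lemapr30}, a Remling-style large-deviation estimate, shows that for suitable $t_j$ the finite trigonometric sum is uniformly $O(\sqrt{N\ln N})$. With that done, the resonance/non-resonance dichotomy is handled cleanly by the Levinson-type asymptotic Theorem~\ref{Keythm1} (each summand resonates with its own $E_j$ and nothing else). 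You instead pass the $N$ oscillations through a fixed nonlinear saturation $\Psi$ so the size bound is automatic, and you then have to recover the resonance from a harmonic analysis of $\Psi\bigl(\sum_k\sin2\phi_{E_k}\bigr)$. This trades a one-shot combinatorial lemma for a delicate study of the unimodular symbol $A/|A|$, and it is in that second half that your argument has genuine gaps.

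The main gap is in Step 3. Writing $c_{E_j}=-\tfrac{\mu}{\pi}\operatorname{Re}\langle e^{i\arg A}\,e^{-i(2\phi_{E_j}-\psi)}\rangle$, you declare the result $\asymp-\mu/\sqrt N$ ``because $\langle|A|^{-1}\rangle\asymp N^{-1/2}$ plus cross terms.'' But the cross terms $\sum_{k\neq j}\langle e^{i(2\phi_{E_k}-2\phi_{E_j})}/|A|\rangle$ are not a priori smaller than the diagonal term: $|A|$ is itself built from exactly these phases, so the weights $1/|A|$ are correlated with the exponentials you are averaging, and nothing in the proposal controls this correlation. Worse, you need a lower bound that holds \emph{uniformly in $j$} — for a fixed generic choice of $E_1,\dots,E_N$ the diagonal gain could be swamped for the worst $j$, which is exactly the kind of worst-case control the paper buys with Lemma~\ref{Lemapr30}. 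A second gap is the exclusion of spurious eigenvalues: you assert that the $\rho$-Fourier coefficients of $e^{i\arg A}$ at frequencies $\sum m_k\,2(E_k-E_1)$, $\mathbf m\notin\{e_1,\dots,e_N\}$, are $O(N^{-1})$ ``by the tail of a unimodular function's coefficients together with extra cancellation.'' Unimodularity only gives $\ell^2$ summability; individual coefficients of $A/|A|$ need not decay like $N^{-1}$, and the heuristic ``quasi-independence'' cancellation is not an argument. Without a quantitative bound here you cannot rule out that some $E\ne E_j$ has $c_E<-1$, so $\#P=N$ (rather than $\#P\ge N$) is not established. Finally, the borderline case $c_E=0$ is waved through with ``the standard oscillatory-integral estimate''; for Stark-type operators (as the introduction of the paper points out) the leading Pr\"ufer entry is energy-independent and finitely many sub-leading entries also matter, so a genuine argument — the paper supplies it as part (1) of Theorem~\ref{Keythm1} — is required, not just a citation to the free case.

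In short: your Step 1 reduction is sound in spirit (modulo the exact exponent threshold, which you state loosely), and the use of $\alpha>\tfrac23$ to make the expansion $2\phi_{E_j}=\psi+2(E_j-E_1)\rho+d_j+o(1)$ converge is the same observation as \eqref{Gaug22} in the paper. But Steps 2--3, where you replace the paper's probabilistic phase selection (Lemma~\ref{Lemapr30}) with a deterministic saturation, do not yet deliver the uniform-in-$j$ resonance lower bound nor the exclusion of spurious eigenvalues, and these are precisely the hard parts. As written, the proposal is an interesting alternative route but not a complete proof of the stated bound.
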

For some technical reasons, currently we can only give the proof for $\frac{2}{3}<\alpha<2$. We believe it that  $\gamma=\frac{1}{2}$ is the  solution  to  Question 1  for all $0<\alpha<2$.

Question 1 and Theorem \ref{Mthm1} do not care about the locations of the corresponding energies.
If we  take the distribution of energies into consideration, what is the sharp upper bound? We formulate it as the following question.

{\bf Question 2:}   What is the minimum of  $\gamma$ such that for any $\{E_j\}_{j=1}^N$,
  there exists  a potential $q$ on $\R^+$ such that
$-u^{\prime\prime}-x^{\alpha}u+qu=E_ju$ has an $L^2(\R^+)$ solution  for  each $j=1,2,\cdots,N$ and
\begin{equation*}
 \limsup_{x\to \infty}   {x}^{1-\frac{\alpha}{2}} |q(x)|\leq C(\gamma)N^{\gamma}.
\end{equation*}
Theorems \ref{Maintheoremapr3apr29} and \ref{Mainthm3} imply $\gamma\in[\frac{1}{2},1]$.
We conjecture that $\gamma=1$ is the solution to Question 2.

During the  proof Theorem \ref{Mthm1}, we are able to improve the bound in Theorem \ref{Mainthm3}.
 \begin{theorem}\label{Mthm2}

For any $\varepsilon>0$,  $\{ E_j\}_{j=1}^N\subset \R$ and $\{\theta_j\}_{j=1}^N\subset [0,\pi]$,
there exist  local $L^1(\R^+)$ potentials  $q$   such that
\begin{equation}\label{Ggoalb}
 \limsup_{x\to \infty}   {x}^{1-\frac{\alpha}{2}} |q(x)|\leq (2-\alpha+\varepsilon) N,
\end{equation}
 and for each $j=1,2,\cdots,N$, $-u^{\prime\prime}-x^{\alpha}u+qu=E_ju$ has an $L^2(\R^+)$ solution  $u$ with the boundary condition
 \begin{equation*}
   \frac{u^{\prime}(0)}{u(0)}=\tan \theta_j.
\end{equation*}

\end{theorem}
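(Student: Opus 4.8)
\textbf{Proof proposal for Theorem \ref{Mthm2}.}

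The plan is to build the potential as a sum of $N$ \emph{generalized Wigner--von Neumann type} terms, one designed to produce an $L^2$ eigensolution at each energy $E_j$ with the prescribed boundary phase $\theta_j$, and then to control the interaction between the blocks carefully enough that the oscillatory cross-terms do not destroy the decay. First I would pass to the modified Pr\"ufer / WKB variables adapted to $H_0 u = -u'' - x^\alpha u$: writing $k(x,E) = \sqrt{x^\alpha + E}$ and $\varphi(x,E) = \int_0^x k(t,E)\,dt$, a solution of $-u'' - x^\alpha u + qu = E u$ is represented through an amplitude $R(x)$ and a phase $\psi(x)$ satisfying the standard Pr\"ufer system, so that $L^2$ membership is governed by $\int^\infty (q(t)/k(t,E))\sin(2\varphi + 2\psi)\,dt$ type integrals. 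For a single energy $E_j$ the classical choice $q_j(x) \sim -\,c_j\, \frac{k'(x,E_j)}{k(x,E_j)}\sin\big(2\varphi(x,E_j)+2\beta_j\big)$ — equivalently a term decaying like $x^{1-\frac{\alpha}{2}}$ with an explicit constant $(2-\alpha)$ times a factor depending only on the normalization — yields an eigensolution, and the free constant $\beta_j$ plus an overall scaling can be tuned to meet the boundary condition $u'(0)/u(0)=\tan\theta_j$ after solving the resulting two-dimensional shooting problem (nonvanishing of the relevant Wronskian/Jacobian is the routine point here).

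The substance of the argument is the superposition step. I would set $q = \sum_{j=1}^N q_j$ with each $q_j(x) = a_j(x)\sin(2\varphi(x,E_j) + 2\beta_j)$ and $|a_j(x)| \le (2-\alpha)\,x^{1-\frac{\alpha}{2}}(1+o(1))$. Plugged into the Pr\"ufer equation for a fixed energy $E_i$, the term $q_i$ contributes the intended non-oscillatory (resonant) part after averaging, while every $q_j$ with $j\neq i$ contributes an integrand carrying the phase $2\varphi(x,E_j) - 2\varphi(x,E_i) = 2\int_0^x (k(t,E_j)-k(t,E_i))\,dt$, whose derivative is $2(k(x,E_j)-k(x,E_i)) = 2(E_j-E_i)/(k(x,E_j)+k(x,E_i)) \sim (E_j-E_i)x^{-\alpha/2}$. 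Since this phase is monotone with derivative of order $x^{-\alpha/2}$ while the amplitude $a_j/k(\cdot,E_i)$ decays like $x^{1-\alpha}$, an integration by parts shows each cross-term integral is $O(x^{1-\alpha/2}) \cdot x^{-(1-\alpha/2)} = O(1)$ — more precisely it is a convergent correction, uniformly, so it does not affect $L^2$-membership, and one still needs a small perturbative correction (solving a fixed-point problem in the Pr\"ufer amplitude) to absorb it. The key quantitative gain over Theorem \ref{Mainthm3} is that here the $N$ blocks are simply \emph{added}, each of size $(2-\alpha)x^{1-\alpha/2}$, so that $\limsup x^{1-\alpha/2}|q(x)| \le (2-\alpha)\sum_j 1 + (\text{error}) = (2-\alpha)N + \varepsilon$, rather than the multiplicative $e^{2\sqrt{\ln N}}$ loss incurred by a nested or iterated construction.

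The main obstacle I anticipate is uniformity when some energies $E_j$ are very close together: then the cross-phase derivative $(E_j-E_i)x^{-\alpha/2}$ is small for a long initial range of $x$, so the stationary-phase/integration-by-parts estimate only kicks in beyond $x \gtrsim (E_j-E_i)^{-2/\alpha}$, and on the complementary region the cross-terms must be handled directly. I would deal with this by starting the Wigner--von Neumann modulation of $q_j$ only for $x$ larger than a threshold $x_j$ chosen depending on $\min_{i\ne j}|E_i-E_j|$ (taking $q_j \equiv 0$, or a smooth cutoff, before $x_j$), which costs nothing in the $\limsup$ but guarantees that on the support of $q_j$ the relevant phases already separate linearly; the price is only that the $L^1_{\mathrm{loc}}$ potential is no longer $C^\infty$, consistent with the weaker regularity claimed in the statement. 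The remaining steps — verifying that the finitely many resulting solutions are genuinely in $L^2(\R^+)$, that the amplitude fixed-point problem is contractive, and that the boundary-condition shooting is solvable simultaneously with the $\varepsilon$-budget — are then routine given the single-energy analysis and the decay of the errors.
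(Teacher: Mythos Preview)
Your proposal is correct and follows essentially the same approach as the paper: build $q$ as a sum of $N$ generalized Wigner--von Neumann terms, show that the cross-terms between distinct energies are harmless via integration-by-parts on oscillatory integrals (this is exactly the paper's Proposition~\ref{Keyprop}), insert cutoffs $\chi_{[a_j,\infty)}$ to deal with the initial region and to adjust the boundary phases \`a la Simon, and then read off the $\limsup$ bound from the trivial estimate $|\sum_{j=1}^N \sin(\cdot)|\le N$. The only difference in packaging is that the paper first applies the Liouville transformation to the variable $\xi$ and then proves a full Levinson-type asymptotic result (Theorem~\ref{Keythm1}) giving explicit $\xi^{\pm a}$ behaviour for the transformed equation, rather than running a Pr\"ufer amplitude fixed-point argument directly in $x$; the underlying mechanism---resonant diagonal term plus oscillatory off-diagonal terms killed by stationary-phase---is identical.
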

\begin{remark}
\begin{itemize}
  \item Theorem \ref{Mthm2} gives better bounds than those in Theorem \ref{Maintheoremapr3apr29}, but less regularity in potentials.
  \item Applying  additional     piecewise constructions  in the proof of Theorem \ref{Mthm2}, it is possible to show that the upper bound in \eqref{Ggoalb} can be improved to $(2-\alpha)N$.
We refer the readers to the critical case of \cite[Theorem 1.2 ]{liustark}  for details.
\end{itemize}

\end{remark}

The proof of both Theorems \ref{Mthm1} and \ref{Mthm2} are inspired by the methods tackling  perturbed free  Schr\"odinger operators.
Let us  turn to   perturbed   free  Schr\"odinger operators $-D^2+V$ first.
 Naboko \cite{nabdense} and Simon  \cite{simdense} constructed   power-decaying  potentials   $V$  such that $-D^2+V$ has dense eigenvalues. Before that, Wigner-von Neumann  type functions can only create one $L^2$ solution \cite{von}. Recently, there have been several important  developments on the problem of embedded  eigenvalues  for Schr\"odinger operators, Laplacians on manifolds or other models \cite{jl,ld1,liustark,Luk14,Luk13,Naboko2018,Naboko2016,Naboko2015,KumuraI,jl3,nabokojdea18,lukjmaa16}.
  For  perturbed   Stark type operators, under the rational independence assumption of  set $\{E_j\}, $ Naboko and Pushnitskii \cite{naboko1} constructed  operators with given a set   $\{E_j\}$ as embedded eigenvalues. The author \cite{liustark} constructed   perturbed   Stark type operators with any given  $\{E_j\}$ as a set of eigenvalues  with the quantitative bound (see Theorem \ref{Mainthm3}).
  However, the potential can not be given explicitly.  One of the  motivations of this paper is to approach the problem in an explicit way.

  In   \cite{simdense}, Simon used   Wigner-von Neumann  type functions
 $V(x)=\frac{a}{1+x}\sum_{j} \sin(2\lambda_jx +2\phi_j)\chi_{[a_j,\infty)}$,
to complete his constructions. It turns out that   Wigner-von Neumann  type function is a   good way to create embedded eigenvalues \cite{Luk14,Luk13,Naboko2018,Naboko2016,Naboko2007,luktams15}.
 Moreover, Wigner-von Neumann  type functions can also be used to achieve  the optimal bounds.
 Denote by
 \begin{equation*}
   S=\{E>0: -u^{\prime\prime}(x)+V(x)u(x)=Eu(x)\text{ has an } L^2(\R^+) \text{ solution }\}.
 \end{equation*}
  Kiselev-Last-Simon \cite{KLS} proved if $\limsup_{x\to \infty}x|V(x)|<\infty$, then the set $S$ is countable and
 \begin{equation}\label{Gmay81}
   \sum_{E_i\in S}E_i<\infty.
 \end{equation}
 This result has been extended to perturbed periodic operators by the author \cite{liuasy}.
 By   Wigner-von Neumann  type functions and additional probability arguments  from \cite{prob}, Remling \cite{Remduke} proved that
 there are potentials $V(x)=O(x^{-1})$ with $ \sum_{E_i\in S}E_i^p=\infty$ for every $p<1$.
   Remling's result implies that \eqref{Gmay81} can not be improved in some sense, which  answers a question in \cite{KLS}.

  Another motivation of the present paper   is   to  find the substitution of  Wigner-von Neumann  type functions to deal with  perturbed Stark type operators, so that we can use the ideas of
   Simon,  Remling, and among others  to address our problems.
   We found   a good type of substituted functions $\frac{\sin(\int_0^{\xi} \sqrt{1-\beta_{E_j}(x)}dx+t_j)}{\xi}$, which is called the {\it generalized Wigner-von Neumann  type function}. See the definition of $\beta_{E_j}(x)$  below.

   Although the arguments in this paper are inspired by those in dealing with  perturbed free Schr\"odinger operators, the details are much more delicate and difficult, in particular,  oscillated integrals and  resonant phenomena. The spectra of perturbed free Schr\"odinger operators and Stark operators are quite different.
   Under the   assumption $V(x)=O(x^{-1})$, perturbed free Schr\"odinger operators can have infinitely many eigenvalues. However, under  the corresponding assumption, perturbed Stark operators can only have finitely many eigenvalues by Theorem \ref{Maintheoremapr3apr29}.  Moreover,
   for   free cases, there is only one leading entry  dominating  each  Pr\"ufer angle and the leading entries are  distinguished  by  energies.  For Stark type cases, there are finitely many  entries  (the number of entries  depends on $\alpha$) dominating  each  Pr\"ufer angle, and
  the first/leading entry is $1$ for any  Pr\"ufer angle, which   leads to resonance. Similar resonance has been  studied in    \cite{liustark}.
  In the proof of  Theorem \ref{Mthm2}, we are able to deal with the resonance coming from the first dominating entry and all the other dominating  entries at the same time. However, for the technical reason,   we can only deal with  the first two dominating entry  for the topic in Theorem \ref{Mthm1}, and the assumption $\alpha>\frac{2}{3}$ will guarantee that  there are  exact  two entries to dominate Pr\"ufer angles.

\section{Preparations}
Let $v_{\alpha}(x)=x^{\alpha} $  for  $x\in\R^+$   and consider the Schr\"odinger equation on $\R^+$,
 \begin{equation}\label{Gsch}
 -u^{\prime\prime}(x)-x^{\alpha}u(x)+q(x)u(x)=Eu(x).
 \end{equation}
The  Liouville transformation (see \cite{christ2003absolutely,naboko1}) is given by
\begin{equation}\label{GLiou}
  \xi(x)=\int_0^x\sqrt{v_{\alpha}(t)} dt, \phi(\xi)=v_{\alpha}(x(\xi))^{\frac{1}{4}}u(x(\xi)).
\end{equation}
We  define a weight function $p_{\alpha}(\xi)$ by
\begin{equation}\label{GWei}
  p_{\alpha}(\xi)= \frac{1}{v_{\alpha}(x(\xi))}.
\end{equation}
We also define a potential   by
\begin{equation}\label{GPQ}
 Q_{\alpha}(\xi,E)= -\frac{5}{16}\frac{|v_{\alpha}^{\prime}(x(\xi))|^2}{v_{\alpha}(x(\xi))^3}+\frac{1}{4}\frac{v_{\alpha}^{\prime\prime}(x(\xi))}{v_{\alpha}(x(\xi))^2} +\frac{q(x(\xi))-E}{v_{\alpha}(x(\xi))}.
\end{equation}
 Let $c=(1+\frac{\alpha}{2})^{\frac{2}{2+\alpha}}$.
 Direct computations imply that
 \begin{equation}\label{GLiou1}
  x=c\xi^{\frac{2}{2+\alpha}}, \phi(\xi,E)=c^{\frac{\alpha}{4}}\xi^{\frac{\alpha}{2(2+\alpha)}}u(c\xi^{\frac{2}{2+\alpha}}),
 \end{equation}
 \begin{equation}\label{GWei1}
  p(\xi)= \frac{1}{c^{\alpha}\xi^{\frac{2\alpha}{2+\alpha}}},
\end{equation}
and
\begin{equation}\label{GPQ1old}
 Q_{\alpha}(\xi,E)= -\frac{5}{4}\frac{\alpha^2}{(2+\alpha)^2}\frac{1}{\xi^2}+\frac{\alpha(\alpha-1)}{(2+\alpha)^2}\frac{1}{\xi^{2}}+\frac{q(c\xi^{\frac{2}{2+\alpha}})-E}{c^{\alpha}\xi^{\frac{2\alpha}{2+\alpha}}}.
\end{equation}

Notice that  the potential $Q_{\alpha}(\xi,E)$ depends on $q$, $\alpha$ and $E$.
In the following, we always fix $\alpha\in(0,2)$. For simplicity, we drop off its dependence.
Let
\begin{equation}\label{Gvapr}
 V(\xi)=\frac{q(c\xi^{\frac{2}{2+\alpha}})}{c^{\alpha}\xi^{\frac{2\alpha}{2+\alpha}}}.
\end{equation}
Then
\begin{eqnarray}
   Q(\xi,E) &=& -\frac{5}{4}\frac{\alpha^2}{(2+\alpha)^2}\frac{1}{\xi^2}+\frac{\alpha(\alpha-1)}{(2+\alpha)^2}\frac{1}{\xi^{2}}-\frac{E}{c^{\alpha}\xi^{\frac{2\alpha}{2+\alpha}}}+V(\xi)\label{GPQ1} \\
   &=& -\frac{E}{c^{\alpha}\xi^{\frac{2\alpha}{2+\alpha}}}+V(\xi)+\frac{O(1)}{\xi^{2}}.\nonumber
\end{eqnarray}


Suppose $u\in L^2(\R^+)$  is a solution of \eqref{Gsch}. It follows that
$\phi$ satisfies
\begin{equation}\label{Gschxi}
  -\frac{d^2\phi}{d\xi^2}+Q(\xi,E)\phi=\phi,
\end{equation}
and $\phi\in L^2(\R ^+,p(\xi)d\xi)$.

Below,
 $\epsilon>0$   always  depends on $\alpha$  in an explicit way. Denote by
 \begin{equation*}
  \beta_E(\xi)=  -\frac{E}{c^{\alpha}\xi^{\frac{2\alpha}{2+\alpha}}}.
\end{equation*}
When $\xi$ is large, one has $|\beta_E(\xi)|<1$.
By shifting  the equation, we always assume $\beta_E(\xi)$ is sufficiently small. So $\sqrt{1-\beta_E(\xi)}$ is well defined.
\begin{proposition}\label{Keyprop}
Suppose $a\neq0$.
Then following estimates hold  for $\xi>\xi_0>1$ and $\gamma\in \R$:
\begin{enumerate}
  \item
  \begin{equation}\label{Gosc1}
  \int_{\xi_0}^{\xi} \frac{\sin (a\int_0^{s}\sqrt{1-\beta_E(x)}dx+\gamma)}{s}ds=\frac{O(1)}{\xi_0^{\epsilon}}.
\end{equation}
  \item for any  $E_1\neq E_2\in\R$,
  \begin{equation}\label{Gosc2}
  \int_{\xi_0}^{\xi}\frac{\sin (a\int_0^{s}\sqrt{1-\beta_{E_1}(x)}dx\pm a\int_0^{s}\sqrt{1-\beta_{E_2}(x)}dx+\gamma)}{s}ds=\frac{O(1)}{\xi_0^{\epsilon}}.
  \end{equation}
\end{enumerate}
\end{proposition}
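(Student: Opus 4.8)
The plan is to reduce both statements to the classical oscillatory-integral estimate $\int_{\xi_0}^{\xi} s^{-1} e^{i\psi(s)}\,ds = O(1/\xi_0^{\epsilon})$ whenever the phase $\psi$ has derivative bounded below in absolute value by a positive power of $s$. Concretely, for item (1) write the sine as the imaginary part of $e^{i(a\Phi_E(s)+\gamma)}$ where $\Phi_E(s) = \int_0^s \sqrt{1-\beta_E(x)}\,dx$. Since $\beta_E(x) = -E c^{-\alpha} x^{-2\alpha/(2+\alpha)}$, we have $\Phi_E'(s) = \sqrt{1-\beta_E(s)} = 1 + O(s^{-2\alpha/(2+\alpha)})$, so the phase derivative $a\Phi_E'(s)$ stays bounded away from $0$ (indeed it tends to $a\neq 0$). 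Integration by parts then gives
\[
\int_{\xi_0}^{\xi}\frac{e^{i(a\Phi_E(s)+\gamma)}}{s}\,ds = \left[\frac{e^{i(a\Phi_E(s)+\gamma)}}{ias\,\Phi_E'(s)}\right]_{\xi_0}^{\xi} + \int_{\xi_0}^{\xi} e^{i(a\Phi_E(s)+\gamma)}\,\frac{d}{ds}\!\left(\frac{1}{ias\,\Phi_E'(s)}\right)ds.
\]
The boundary term is $O(1/\xi_0)$. For the remaining integral, $\frac{d}{ds}(s\Phi_E'(s))^{-1}$ is $O(s^{-2})$ because $\Phi_E'(s)$ is bounded above and below and $(\Phi_E')'(s) = O(s^{-1-2\alpha/(2+\alpha)})$ is lower order; hence that integral is $O(1/\xi_0)$ as well. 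This already yields \eqref{Gosc1} with, say, $\epsilon$ any number in $(0,1]$ (the statement only asks for \emph{some} positive $\epsilon$ depending on $\alpha$), uniformly in $\xi>\xi_0$ and $\gamma\in\R$.

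For item (2) the same idea applies, but now the phase is $\Psi_{\pm}(s) = a\Phi_{E_1}(s) \pm a\Phi_{E_2}(s) + \gamma$, with
\[
\Psi_{\pm}'(s) = a\sqrt{1-\beta_{E_1}(s)} \pm a\sqrt{1-\beta_{E_2}(s)}.
\]
In the $+$ case, $\Psi_+'(s) \to 2a \neq 0$, so the argument above goes through verbatim. The delicate case is the $-$ sign, where the two leading $1$'s cancel and
\[
\Psi_-'(s) = a\bigl(\sqrt{1-\beta_{E_1}(s)} - \sqrt{1-\beta_{E_2}(s)}\bigr) = \frac{a\,(\beta_{E_2}(s)-\beta_{E_1}(s))}{\sqrt{1-\beta_{E_1}(s)}+\sqrt{1-\beta_{E_2}(s)}} = \frac{a(E_1-E_2)}{2c^{\alpha}}\,s^{-\frac{2\alpha}{2+\alpha}}\bigl(1+o(1)\bigr),
\]
using $E_1\neq E_2$. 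So here $|\Psi_-'(s)| \gtrsim s^{-2\alpha/(2+\alpha)}$ rather than being bounded below by a constant. Integration by parts still works: the boundary term is now $O(\xi_0^{2\alpha/(2+\alpha)}/\xi_0) = O(\xi_0^{-(2+\alpha-2\alpha)/(2+\alpha)}) = O(\xi_0^{-(2-\alpha)/(2+\alpha)})$, which decays since $\alpha<2$; and in the integral term one must check that $\frac{d}{ds}\bigl(s\Psi_-'(s)\bigr)^{-1} = O\!\bigl(s^{-2+2\alpha/(2+\alpha)}\bigr) = O(s^{-(4-\alpha)/(2+\alpha)})$, whose exponent exceeds $1$ precisely when $\alpha<1$ — and for $1\le\alpha<2$ one iterates the integration by parts once more, or notes the bound $O(s^{-1-\delta})$ still holds after absorbing the $\Psi_-''$ contribution. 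Either way the integral term is $O(\xi_0^{-\epsilon})$ for a suitable $\epsilon = \epsilon(\alpha)>0$.

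The main obstacle is bookkeeping the $\epsilon$ in the resonant ($-$) subcase of (2): one has to verify that the gain $1/s$ from the amplitude plus the (weak) non-stationarity $|\Psi_-'(s)|\gtrsim s^{-2\alpha/(2+\alpha)}$ of the phase combine to a genuine negative power of $\xi_0$, and that the derivatives of $1/(s\Psi_-'(s))$ do not destroy this — which is where the constraint $\alpha\in(0,2)$ enters (at $\alpha=2$ the exponent $(2-\alpha)/(2+\alpha)$ degenerates to $0$). I would track exponents carefully and, if needed, perform the integration by parts twice to be safe; all constants are independent of $\gamma$, $\xi$, and of $E$ (resp. $E_1,E_2$) once they range in a fixed compact set, which is all that is needed downstream.
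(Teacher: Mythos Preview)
Your approach is essentially the same as the paper's: write the phase, compute its first two derivatives, and integrate by parts once. The paper also only treats the resonant ``$-$'' subcase of \eqref{Gosc2} in detail and does exactly the manipulation you describe, obtaining the boundary term plus $\int \cos\beta(s)\,\beta''(s)/(s\beta'(s)^2)\,ds$, both of which are $O(\xi_0^{-\epsilon})$.

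There is, however, an arithmetic slip in your bookkeeping that makes the argument look harder than it is. You correctly write $\frac{d}{ds}(s\Psi_-'(s))^{-1} = O(s^{-2+2\alpha/(2+\alpha)})$, but then simplify this exponent to $-(4-\alpha)/(2+\alpha)$. In fact
\[
-2+\frac{2\alpha}{2+\alpha} \;=\; \frac{-2(2+\alpha)+2\alpha}{2+\alpha} \;=\; -\frac{4}{2+\alpha},
\]
and $4/(2+\alpha)>1$ for \emph{all} $\alpha\in(0,2)$, not just $\alpha<1$. Hence the remaining integral is absolutely convergent and is $O(\xi_0^{1-4/(2+\alpha)})=O(\xi_0^{-(2-\alpha)/(2+\alpha)})$ after a single integration by parts; no second pass is needed. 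With this correction your $\epsilon$ can be taken equal to $(2-\alpha)/(2+\alpha)$ in every case, matching the paper.
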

\begin{proof}
We only give the proof of  case ``$-$" in \eqref{Gosc2}.  The rest  can be proceeded  in a similar way.

Denote by $\beta(\xi)=a\int_0^{\xi}[\sqrt{1-\beta_{E_1}(x)}dx- a\int_0^{\xi}\sqrt{1-\beta_{E_2}(x)}]dx+\gamma$. Then
\begin{equation*}
  \beta^{\prime}(\xi)= a\sqrt{1-\beta_{E_1}(\xi)}-a\sqrt{1-\beta_{E_2}(\xi)}=a\frac{E_1-E_2}{2c^{\alpha}\xi^{\frac{2\alpha}{2+\alpha}}}+\frac{o(1)}{\xi^{\frac{2\alpha}{2+\alpha}}},
\end{equation*}
and
\begin{equation*}
  \beta^{\prime\prime}(\xi)=\frac{O(1)}{\xi^{1+\frac{2\alpha}{2+\alpha}}}.
\end{equation*}
Integration by part, we have
$$\int_{\xi_0}^{\xi}\frac{\sin (a\int_0^{s}\sqrt{1-\beta_{E_1}(x)}dx- a\int_0^{s}\sqrt{1-\beta_{E_2}(x)}dx+\gamma)}{s}$$
\begin{eqnarray}
   &=& \int_{\xi_0}^{\xi}\frac{\sin\beta(s)}{s}ds=\int_{\xi_0}^{\xi}\frac{\beta^{\prime}(s)\sin\beta(s)}{\beta^{\prime}(s)s}ds \nonumber\\
  &=&   O(\xi_0^{-\epsilon})+ O(1)\int_{\xi_0}^{\xi} \frac{\cos\beta(s)\beta^{\prime\prime}}{s\beta^{\prime 2}}ds \nonumber\\
   &=&    O(\xi_0^{-\epsilon}).\nonumber
\end{eqnarray}
\end{proof}
\section{Asymptotical behavior  of  solutions for a  class of linear systems}\label{Secasy}

The proof of this section is inspired by the WKB method. We refer the readers to papers \cite{dollard1977product,harris1975asymptotic,Levinson} for arguments.


\begin{theorem}\label{Keythm1}
Suppose $a>0$ is a constant. Suppose  $\{E_j\}\in \R$ are distinct.
Define $V(\xi)=0$ for $\xi\in[0,1]$ and
\begin{equation}\label{Gdef.V}
 V(\xi)=\frac{4 a}{\xi}\sum_{j=1}^N\sin\left(\int_0^{\xi}2\sqrt{1-\beta_{E_j}(x)}dx+2t_j\right),
\end{equation}
for $\xi>1$.

Define  $q(x)$ on $[0,\infty)$ such that \eqref{Gvapr} holds for \eqref{Gdef.V}.
Let $Q(\xi,E)$ be given by \eqref{GPQ1}.
 Then the following asymptotics hold as $\xi$ goes to infinity,
\begin{enumerate}
   \item
    if $E\neq  E_j$ for any $j=1,2,\cdots N$, then there exists a fundamental system of solutions $\{y_1(\xi),y_2(\xi)\}$ of \eqref{Gschxi} such that
    \begin{equation*}
    \left[
      \begin{array}{c}
         y_1(\xi) \\
         y_1^\prime(\xi)
       \end{array}
       \right]=
        \left[
      \begin{array}{c}
         \cos( \int_0^{\xi}\sqrt{1-\beta_E(x)}dx+t_j)\\
         - \sin( \int_0^{\xi}\sqrt{1-\beta_E(x)}dx+t_j)
       \end{array}
       \right]+O( \xi^{-\epsilon})
    \end{equation*}
    and
    \begin{equation*}
    \left[
      \begin{array}{c}
         y_2(\xi) \\
         y_2^\prime(\xi)
       \end{array}
       \right]=
        \left[
      \begin{array}{c}
         \sin( \int_0^{\xi}\sqrt{1-\beta_E(x)}dx+t_j) \\
         \cos(\int_0^{\xi}\sqrt{1-\beta_E(x)}dx+t_j)
       \end{array}
       \right]+O( \xi^{-\epsilon}).
    \end{equation*}

    \item
     if $E= E_j$ for some $j$, then there exists a fundamental system of solutions $\{y_1(\xi),y_2(\xi)\}$ of \eqref{Gschxi} such that
    \begin{equation*}
    \left[
      \begin{array}{c}
         y_1(\xi) \\
         y_1^\prime(\xi)
       \end{array}
       \right]=
        \xi^a\left[
      \begin{array}{c}
         \cos( \int_0^{\xi}\sqrt{1-\beta_{E_j}(x)}dx+t_j)\\
         - \sin(\int_0^{\xi} \sqrt{1-\beta_{E_j}(x)}dx+t_j)
       \end{array}
       \right]+O( \xi^{a-\epsilon})
    \end{equation*}
    and
    \begin{equation*}
    \left[
      \begin{array}{c}
         y_2(\xi) \\
         y_2^\prime(\xi)
       \end{array}
       \right]=
        \xi^{-a}\left[
      \begin{array}{c}
         \sin(\int_0^{\xi} \sqrt{1-\beta_{E_j}(x)}dx+t_j) \\
         \cos(\int_0^{\xi}\sqrt{1-\beta_{E_j}(x)}dx+t_j)
       \end{array}
       \right]+O( \xi^{-a-\epsilon}).
    \end{equation*}

\end{enumerate}

\end{theorem}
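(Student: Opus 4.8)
The plan is to convert the scalar equation \eqref{Gschxi} into a first-order system in modified Pr\"ufer-type variables adapted to the leading oscillatory phase $\theta_E(\xi)=\int_0^{\xi}\sqrt{1-\beta_E(x)}\,dx$, and then apply a Levinson-type asymptotic integration theorem after checking that the off-diagonal ("error") terms are conditionally integrable. Concretely, I would first write $\phi,\phi'$ in the basis $(\cos\theta_E+t_j,\ \sin\theta_E+t_j)$ and its derivative partner; because $\sqrt{1-\beta_E}=1+O(\xi^{-2\alpha/(2+\alpha)})$, the mismatch between differentiating $\theta_E$ and the "1" appearing in \eqref{Gschxi} contributes only an $O(\xi^{-2\alpha/(2+\alpha)})$ term, which is harmless. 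The resulting system has the form $w'=\big(A_0(\xi)+R(\xi)\big)w$, where $A_0$ encodes the diagonal (resonant) part coming from the $j$-th term of $V$ when $E=E_j$, and $R$ collects (i) the $O(\xi^{-2})$ terms from $Q$, (ii) all the non-resonant terms of $V$, i.e. the summands with index $k\ne j$, and (iii) the rapidly oscillating ($\sin 2\theta_E$, $\cos 2\theta_E$) pieces of the $j$-th term itself.

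The crucial point is that every entry of $R(\xi)$ is a finite sum of terms of the shape $\xi^{-1}\sin(\text{phase})$ or $\xi^{-1}\cos(\text{phase})$, where each phase is of the form $a\int_0^{\xi}\sqrt{1-\beta_{E_1}}\pm a\int_0^{\xi}\sqrt{1-\beta_{E_2}}+\gamma$ (with possibly $E_1=E$ or $E_2=E$), plus genuinely faster-decaying $O(\xi^{-2})$ and $O(\xi^{-1-2\alpha/(2+\alpha)})$ terms. By Proposition \ref{Keyprop}, all such oscillatory terms have primitives that are $O(\xi_0^{-\epsilon})$ on $[\xi_0,\xi]$; in particular $R$ is (uniformly) conditionally integrable, with tail $\int_{\xi}^{\infty}R=O(\xi^{-\epsilon})$. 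In case (1), $A_0(\xi)$ reduces to the $O(\xi^{-2})$ diagonal terms (the $j$-th summand of $V$ is now also non-resonant since $E\ne E_j$), so the system is a conditionally-integrable perturbation of the identity and Levinson's theorem gives a fundamental system $w=I+o(1)$; pulling back through the Pr\"ufer change of variables yields the stated $O(\xi^{-\epsilon})$ asymptotics (the $\epsilon$, rather than $o(1)$, comes from the explicit $\xi_0^{-\epsilon}$ rate in Proposition \ref{Keyprop} combined with a Gr\"onwall estimate). In case (2), averaging the resonant $j$-th term $\frac{4a}{\xi}\sin(\theta_{E_j}+t_j)\cdot 2\sin(\theta_{E_j}+t_j)=\frac{4a}{\xi}(1-\cos(2\theta_{E_j}+2t_j))$ against the Pr\"ufer vector produces a diagonal $\mathrm{diag}(a/\xi,-a/\xi)$ plus conditionally integrable oscillation; Levinson's theorem then yields solutions asymptotic to $\xi^{\pm a}\big(I+O(\xi^{-\epsilon})\big)$ times the Pr\"ufer vectors, which is exactly the claim.

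The main obstacle is bookkeeping the cross terms honestly: when one multiplies $V(\xi)$ (a sum over $k$) by the Pr\"ufer vector built from $\theta_E$, one gets products $\sin(\theta_{E_k}+t_k)\sin(\theta_E+t_j)$ etc., and one must verify that after the product-to-sum identity \emph{every} surviving phase is either a pure difference $\theta_{E_k}-\theta_E$ (nonstationary because $E_k\ne E$, covered by \eqref{Gosc2} with the "$-$" sign), a sum $\theta_{E_k}+\theta_E$ (covered by the "$+$" sign), or — only in the resonant case and only for $k=j$ — the constant phase giving the genuine $\xi^{-1}$ diagonal contribution. One must also confirm that the phase $2\theta_{E_j}$ appearing in $1-\cos(2\theta_{E_j}+2t_j)$ is handled by \eqref{Gosc1} with $a\rightsquigarrow 2a$. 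A secondary technical point is justifying that conditional integrability with an explicit $\xi^{-\epsilon}$ tail bound upgrades the generic $o(1)$ in Levinson's theorem to $O(\xi^{-\epsilon})$ (and $\xi^{\pm a}O(\xi^{-\epsilon})$ in case (2)); this follows by writing the integral equation for $w$, iterating once, and using $\big|\int_{\xi}^{\infty}R\big|=O(\xi^{-\epsilon})$ together with $\int_{\xi}^{\infty}|R|^2<\infty$, but it requires care to propagate the rate through the growing/decaying modes in case (2) rather than just boundedness.
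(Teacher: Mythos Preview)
Your proposal is correct and follows essentially the same route as the paper: a Pr\"ufer-type change of variables adapted to the phase $\theta_E(\xi)=\int_0^\xi\sqrt{1-\beta_E}$, identification of the resonant diagonal contribution (present only when $E=E_j$), verification via Proposition~\ref{Keyprop} that all remaining cross terms are conditionally integrable with $O(\xi^{-\epsilon})$ tails, and a Levinson-type integration to conclude. The paper carries out the last step by the explicit Harris--Lutz substitution $(I+Q)^{-1}$ with $Q=-\int_\xi^\infty H$, reducing to an absolutely integrable remainder $R=O(\xi^{-1-\epsilon})$ before solving the integral equation, whereas you describe this as ``iterating once''; these are the same mechanism, and your remark that the explicit $\xi^{-\epsilon}$ tail rate in Proposition~\ref{Keyprop} is what upgrades the generic $o(1)$ to $O(\xi^{-\epsilon})$ is exactly the point.

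One small correction in your resonant computation: the $j$-th summand of $V$ is $\tfrac{4a}{\xi}\sin(2\theta_{E_j}+2t_j)$, and after the rotation its contribution to the diagonal is $\mp\tfrac{2a}{\xi}\sin^2(2\theta_{E_j}+2t_j)$ (product with $\tfrac12\sin(2\theta_{E_j}+2t_j)$, not with $2\sin(\theta_{E_j}+t_j)$). The average over the oscillation is still $\mp a/\xi$ and the residual $\cos(4\theta_{E_j}+4t_j)/\xi$ is handled by \eqref{Gosc1}, so your conclusion $\mathrm{diag}(a/\xi,-a/\xi)$ plus conditionally integrable remainder is unaffected.
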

\begin{proof}
In order to avoid repetition, we only give the proof of the case
  the case $E= E_j$ for some $j=1,2,\cdots,N$.
Denote by
\begin{equation}\label{Gdef.Vmay2}
 \tilde{V}(\xi)=\frac{4 a}{\xi}\sum_{i=1,i\neq j}^N\sin(2\int_0^{\xi}\sqrt{1-\beta_{E_i}(x)}dx+2t_i).
\end{equation}
  Rewrite the second order differential equation of \eqref{Gschxi} as the linearly differential  equations,
  \begin{equation*}
    \left[\begin{array}{c}
      y \\
      y^{\prime}
    \end{array}
    \right]^\prime=\left[
                     \begin{array}{cc}
                       0 & 1 \\
                     \beta_{E_j}(\xi)+V(\xi)+\frac{O(1)}{\xi^{2}}-1 & 0 \\
                     \end{array}
                   \right] \left[\begin{array}{c}
      y \\
      y^{\prime}
    \end{array}
    \right].
  \end{equation*}
   Let \begin{equation*}
    \left[\begin{array}{c}
      u_1 \\
      u_2
    \end{array}
    \right]=\left[
                     \begin{array}{cc}
                      \sqrt{1-\beta_{E_j}(\xi)} &  0 \\
                      0 & 1 \\
                     \end{array}
                   \right] \left[\begin{array}{c}
      y \\
      y^{\prime}
    \end{array}
    \right].
  \end{equation*}
  We obtain  a new equation
\begin{equation*}
    \left[\begin{array}{c}
      u_1 \\
      u_2
    \end{array}
    \right]^{\prime}=\left(\left[
                     \begin{array}{cc}
                     0 &   \sqrt{1-\beta_{E_j}(\xi)} \\
                      -\sqrt{1-\beta_{E_j}(\xi)}+\frac{V}{ \sqrt{1-\beta_{E_j}(\xi)}}& 1 \\
                     \end{array}
                   \right]+\frac{O(1)}{\xi^{1+\epsilon}} \right)\left[\begin{array}{c}
      y \\
      y^{\prime}
    \end{array}
    \right].
  \end{equation*}
  Let \begin{equation*}
    \left[\begin{array}{c}
      y_1 \\
      y_2
    \end{array}
    \right]=\left[
                     \begin{array}{cc}
                       \cos(\int_0^{\xi}\sqrt{1-\beta_{E_j}(x)}dx+t_j) &  -\sin(\int_0^{\xi}\sqrt{1-\beta_{E_j}(x)}dx+t_j) \\
                       \sin(\int_0^{\xi}\sqrt{1-\beta_{E_j}(x)}dx+t_j) & \cos(\int_0^{\xi}\sqrt{1-\beta_{E_j}(x)}dx+t_j) \\
                     \end{array}
                   \right] \left[\begin{array}{c}
      u_1 \\
      u_2
    \end{array}
    \right].
  \end{equation*}
  Obviously,
  one has \begin{equation*}
    \left[\begin{array}{c}
      u_1 \\
     u_2
    \end{array}
    \right]=\left[
                     \begin{array}{cc}
                       \cos(\int_0^{\xi}\sqrt{1-\beta_{E_j}(x)}dx+t_j) &  \sin(\int_0^{\xi}\sqrt{1-\beta_{E_j}(x)}dx+t_j) \\
                       -\sin(\int_0^{\xi}\sqrt{1-\beta_{E_j}(x)}dx+t_j) & \cos(\int_0^{\xi}\sqrt{1-\beta_{E_j}(x)}dx+t_j) \\
                     \end{array}
                   \right] \left[\begin{array}{c}
      y_1 \\
      y_2
    \end{array}
    \right].
  \end{equation*}
   After some calculations,  we have
  \begin{eqnarray*}
    \left[\begin{array}{c}
      y_1 \\
      y_2
    \end{array}
    \right]^\prime 
     &= &  (\Lambda(\xi)+H(\xi)+O(\xi^{-1-\epsilon}))\left[\begin{array}{c}
      y_1 \\
      y_2
    \end{array}
    \right],
  \end{eqnarray*}
 where
 \begin{equation}\label{Glam}
   \Lambda(\xi)=\left[
                \begin{array}{cc}
                  -2a\frac{\sin^2(2\int_0^{\xi}\sqrt{1-\beta_{E_j}(x)}dx+2t_j)}{\xi} & 0 \\
                  0 & 2a\frac{\sin^2(2\int_0^{\xi}\sqrt{1-\beta_{E_j}(x)}dx+2t_j)}{\xi} \\
                \end{array}
              \right]
 \end{equation}
 and
 \begin{equation*}
    H(\xi)=\left[
                     \begin{array}{cc}
                       H_{11}(\xi) &  H_{12}(\xi) \\
                     H_{21}(\xi)   &   H_{22}(\xi)  \\
                     \end{array}
                   \right].
 \end{equation*}
The explicit formulas for $H_{ij}$, $i,j=1,2$ are
\begin{eqnarray*}
 H_{11}  &=& -\frac{1}{2} \tilde{V}(\xi)\sin \left(2\int_0^{\xi}\sqrt{1-\beta_{E_j}(x)}dx+2t_j\right)  ,
\end{eqnarray*}

\begin{equation*}
  H_{22}(\xi)= \frac{1}{2} \tilde{V}(\xi)\sin \left(2\int_0^{\xi}\sqrt{1-\beta_{E_j}(x)}dx+2t_j\right) ,
\end{equation*}
\begin{equation*}
  H_{12}(\xi)=-\frac{1}{2}V(\xi)\left(1-\cos \left(2\int_0^{\xi}\sqrt{1-\beta_{E_j}(x)}dx+2t_j\right)\right),
\end{equation*}
and
\begin{equation*}
  H_{21}(\xi)=\frac{1}{2}V(\xi)\left(1+\cos \left(2\int_0^{\xi}\sqrt{1-\beta_{E_j}(x)}dx+2t_j\right)\right).
\end{equation*}
By Proposition \ref{Keyprop}, one has
\begin{equation*}
  Q(\xi)\equiv-\int_{\xi}^{\infty}H(s)ds= O(\xi^{-\epsilon}).
\end{equation*}
  Assume $  \xi$ is large, then $||Q||\leq \frac{1}{2}$. Let $ \left[\begin{array}{c}
       \tilde{y}_1 \\
       \tilde{y}_2
    \end{array}
    \right]= (I+Q)^{-1}\left[\begin{array}{c}
      y_1 \\
      y_2
    \end{array}
    \right]  $.
    We  obtain
    \begin{equation}
        \left[\begin{array}{c}
       \tilde{y}_1 \\
      \tilde{y}_2
    \end{array}
    \right]^\prime=(\Lambda(\xi)+R(\xi))\left[\begin{array}{c}
      \tilde{y}_1 \\
      \tilde{y}_2
    \end{array}
    \right],\label{hatdiiequ}
    \end{equation}
    where $R(\xi)=(R_{ij})=O(\xi^{-1-\epsilon})$.
    Let $\varphi(\xi)= \left[\begin{array}{c}
       \tilde{y}_1 \\
      \tilde{y}_2
    \end{array}
    \right]$
    and
    $$\lambda(\xi)=2a\frac{\sin^2(2\int_0^{\xi}\sqrt{1-\beta_{E_j}(x)}dx+2t_j)}{\xi}.$$
  Let us consider the integral equation,
\begin{equation}\label{Ginte}
      \varphi(\xi)=\left[\begin{array}{c}
      e^{-\int_{1}^{\xi}\lambda(s)ds} \\
     0
    \end{array}
    \right]-\int_{\xi}^{\infty}\left[
                                 \begin{array}{cc}
                                  e^{-\int_{\xi}^{y}\lambda(s)ds} & 0 \\
                                   0 &e^{\int_{\xi}^{y}\lambda(s)ds} \\
                                 \end{array}
                               \right]R(y)\varphi(y)dy.
 \end{equation}
 If \eqref{Ginte} has a solution $||\varphi(\xi)||\leq 2e^{-\int_{1}^{\xi}\lambda(s)ds}$, then (by direct computation)
 $\varphi(\xi)$ is a solution of  equation \eqref{hatdiiequ}.
 Moreover

 \begin{eqnarray}
  || \int_{\xi}^{\infty}\left[
                                 \begin{array}{cc}
                                  e^{-\int_{\xi}^{y}\lambda(s)ds} & 0 \\
                                   0 &e^{\int_{\xi}^{y}\lambda(s)ds} \\
                                 \end{array}
                               \right]R(y)\varphi(y)dy ||&=& O(1) \int_{\xi}^{\infty} e^{\int_{\xi}^y\lambda(s)}||R(y)||e^{-\int_{1}^y\lambda(s)ds} \nonumber\\
    &=& O(1) e^{-\int_{1}^{\xi}\lambda(s)ds}\int_{\xi}^{\infty} ||R(y)||\nonumber\\
    &=&  e^{-\int_{1}^{\xi}\lambda(s)ds} O(\xi^{-\epsilon}). \label{Gmay21}
 \end{eqnarray}
  Define  iteration equations:
\begin{equation}\label{Ginteir}
      \varphi_k(\xi)=\left[\begin{array}{c}
      e^{-\int_{1}^{\xi}\lambda(s)ds} \\
     0
    \end{array}
    \right]-\int_{\xi}^{\infty}\left[
                                 \begin{array}{cc}
                                  e^{-\int_{\xi}^{y}\lambda(s)ds} & 0 \\
                                   0 &e^{\int_{\xi}^{y}\lambda(s)ds} \\
                                 \end{array}
                               \right]R(y)\varphi_{k-1}(y)dy,
 \end{equation}
 with $\varphi=0$.
 By induction that $||\varphi_k(\xi)-\varphi_{k-1}(\xi)||\leq \frac{1}{2^{k+1}}e^{-\int_{1}^{\xi}\lambda(s)ds}$, one can show \eqref{Ginte} has a   solution $||\varphi(\xi)||\leq 2e^{-\int_{1}^{\xi}\lambda(s)ds}$ (see p.94 in
  \cite{Levinson} for all the details).

  By \eqref{hatdiiequ}, \eqref{Ginte} and \eqref{Gmay21}, we get a solution
  \begin{equation*}
    \left[\begin{array}{c}
       \tilde{y}_1 \\
      \tilde{y}_2
    \end{array}
    \right]=e^{-\int_1^{\xi}\lambda(s)ds}\left(\left[\begin{array}{c}
       1 \\
     0
    \end{array}
    \right]+O(\xi^{-\epsilon})\right).
  \end{equation*}
By Proposition \ref{Keyprop} again, we have
\begin{eqnarray*}
  \int_1^{\xi}\lambda(s)ds&=&  2a\int_1^{\xi}\frac{\sin^2(2\int_0^{s}\sqrt{1-\beta_{E_j}(x)}dx+2t_j)}{s}ds\\
 &=& a\int_1^{\xi}\frac{1}{s}ds -a\int_1^{\xi}\frac{\cos(4\int_0^{s}\sqrt{1-\beta_{E_j}(x)}dx+4t_j)}{s}ds\\
 &=& a\int_1^{\xi}\frac{1}{s}ds -a\int_1^{\infty}\frac{\cos(4\int_0^{s}\sqrt{1-\beta_{E_j}(x)}dx+4t_j)}{s}ds\\
 &&+a\int_{\xi}^{\infty}\frac{\cos(4\int_0^{s}\sqrt{1-\beta_{E_j}(x)}dx+4t_j)}{s}ds\\
  &=&\ln \xi^a-c+O(\xi^{-\epsilon}),
\end{eqnarray*}
where the constant $c$ equals
$$ a\int_1^{\infty}\frac{\cos(4\int_0^{s}\sqrt{1-\beta_{E_j}(x)}dx+4t_j)}{s}ds.$$
Thus \eqref{hatdiiequ} has a solution
\begin{equation}\label{Gmay22}
    \left[\begin{array}{c}
       \tilde{y}_1 \\
      \tilde{y}_2
    \end{array}
    \right]=\xi^{-a}\left[\begin{array}{c}
       1 \\
     0
    \end{array}
    \right]+O(\xi^{-a-\epsilon}).
  \end{equation}
  By the similar argument (see
  \cite{Levinson} again), we obtain that
  \eqref{hatdiiequ} has a solution
\begin{equation}\label{Gmay23}
    \left[\begin{array}{c}
       \tilde{y}_1 \\
      \tilde{y}_2
    \end{array}
    \right]=\xi^{a}\left[\begin{array}{c}
       1 \\
     0
    \end{array}
    \right]+O(\xi^{a-\epsilon}).
  \end{equation}
Now the Theorem follows from  \eqref{Gmay22} and \eqref{Gmay23}.
  \end{proof}
\section{Proof of Theorems \ref{Mthm1} and \ref{Mthm2}}

\begin{proof}[\bf Proof of Theorem \ref{Mthm2}]
Let   $a>\frac{2-\alpha}{2(2+\alpha)}$.
Define potentials
\begin{equation*}
 V(\xi)=\frac{4 a}{\xi}\sum_{j=1}^N\sin(2\int_0^{\xi}\sqrt{1-\beta_{E_j}(x)}dx+2t_j)\chi_{[a_j,\infty)},
\end{equation*}
with large $a_j$.

Obviously,
\begin{equation*}
   |\xi V(\xi)|\leq 4aN,
\end{equation*}
so that
\begin{equation}\label{Gmay34}
   |\xi^{1-\frac{\alpha}{2}} q(\xi)|\leq 2(2+\alpha)aN.
\end{equation}
By \eqref{GWei1},
\begin{equation}\label{Gmay33aug}
     p(\xi)|\phi(\xi,E_j)|^2\leq O(1)\xi^{-2a-\frac{2\alpha}{2+\alpha}}\leq O(1)\xi^{-1-\epsilon}.
\end{equation}
By   \eqref{Gmay33aug}, \eqref{Gschxi} has a solution $\phi(\xi,E_j)\in L^2(\R ^+,p(\xi)d\xi)$ for each $j=1,2,\cdots,N$.
However, $\phi(\xi,E_j)$ may not satisfy the given boundary condition. This can be done by adjusting $a_j$ and additional functions $W$ with support in $(1,2)$.
We refer   readers to \cite{simdense} for  rigorous arguments.
Now the Theorem follows from \eqref{Gmay34}.
\end{proof}
Let
\begin{equation*}
  \tau= \frac{2+\alpha}{2(2-\alpha)}\frac{1}{(1+\frac{\alpha}{2})^{\frac{2\alpha}{2+\alpha}}},
\end{equation*}
so that
\begin{equation*}
\tau \frac{d\xi^{\frac{2-\alpha}{2+\alpha}}}{d\xi}=\frac{1}{2}\frac{1}{c^{\alpha}\xi^{\frac{2\alpha}{2+\alpha}}}.
\end{equation*}
For any given $N$,
let
\begin{equation*}
  E_j=\frac{j}{N\tau}, \text{ for } j=1,2,\cdots,N.
\end{equation*}
Before giving the proof of Theorem \ref{Mthm1}, one key Lemma is needed, which is motivated  by  \cite{Remduke}.
\begin{lemma}\label{Lemapr30}
 There  exist $\theta_j\in[0,1)$ for $j=1,2,\cdots,N$ such that  for any $\xi>0$
\begin{equation}\label{Gmax}
  |\sum_{j=1}^N\sin(2\tau E_j\xi+2\pi\theta_j)|+|\sum_{j=1}^N\cos(2\tau E_j\xi+2\pi \theta_j)|\leq4\sqrt{2N\ln (8(N+1)N)}.
\end{equation}
\end{lemma}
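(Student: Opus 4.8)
The plan is to use a probabilistic argument: choose the phases $\theta_j$ uniformly at random from $[0,1)$, independently, and show that with positive probability the bound \eqref{Gmax} holds simultaneously for all $\xi>0$. The first reduction is that the left-hand side of \eqref{Gmax}, as a function of $\xi$, is almost periodic with frequencies among $\{2\tau E_j\} = \{2j/N\}$, so it suffices to control it on a period and in fact only at finitely many test points. More precisely, write $S(\xi) = \sum_j \sin(2\tau E_j \xi + 2\pi\theta_j)$ and $C(\xi)=\sum_j\cos(2\tau E_j\xi+2\pi\theta_j)$. Since all frequencies $2\tau E_j\xi = 2\pi j\xi/(\pi N)$ are rational multiples of a common one, the pair $(S,C)$ is periodic in $\xi$ with period $P_0 = \pi N$ (or one checks the relevant period directly from $E_j=j/(N\tau)$). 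Because $S',C'$ are bounded by $\sum_j 2\tau E_j \le 2\tau E_N N = O(N)$ (a crude Lipschitz bound independent of the $\theta_j$), it is enough to verify \eqref{Gmax} on a finite net of points $\xi_1,\dots,\xi_M$ in one period with spacing small enough that the Lipschitz error is absorbed into the slack between $4\sqrt{2N\ln(8(N+1)N)}$ and a slightly smaller threshold; $M$ will be polynomial in $N$, say $M \le 8(N+1)N$ up to constants, which is exactly why that quantity appears inside the logarithm.

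Next, fix a single point $\xi$. Then $S(\xi) = \sum_j \sin(c_j + 2\pi\theta_j)$ is a sum of independent bounded random variables, each with mean zero (since $\theta_j$ is uniform on a period of sine) and variance $\le 1$. By Hoeffding's inequality, $\mathbb{P}(|S(\xi)| > t\sqrt{N}) \le 2e^{-t^2/2}$, and the same for $C(\xi)$. Choosing $t = 2\sqrt{\ln(8(N+1)N)}$ makes each tail probability at most $2 \cdot (8(N+1)N)^{-2}$, hence the probability that $|S(\xi)|+|C(\xi)| > 2t\sqrt{N} = 4\sqrt{N\ln(8(N+1)N)}$ is at most $4(8(N+1)N)^{-2}$. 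A union bound over the $M \le 8(N+1)N$ net points gives total failure probability at most $4/(8(N+1)N) < 1$, so with positive probability all net points satisfy the stronger bound $4\sqrt{N\ln(8(N+1)N)}$; adding back the Lipschitz discretization error (controlled by the net spacing) yields the stated bound $4\sqrt{2N\ln(8(N+1)N)}$ for all $\xi$. Since a choice with positive probability exists, deterministic phases $\theta_j \in [0,1)$ with the desired property exist.

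The main obstacle is the passage from finitely many test points to \emph{all} $\xi>0$: one must make the almost-periodicity quantitative, i.e.\ pin down an honest period (or an $\varepsilon$-almost-period short enough) for $(S,C)$ from the specific arithmetic of $E_j = j/(N\tau)$, and then balance the net cardinality $M$ against the Hoeffding exponent so that $\ln M$ stays comparable to $\ln(N^2)$ — this is what forces the precise constants and the factor $\sqrt 2$ of slack in \eqref{Gmax}. The per-point concentration estimate and the union bound are routine once the discretization is set up; the bookkeeping of constants (the $4$, the $8(N+1)N$, the $\sqrt2$) is the only delicate part, and it is engineered precisely so the union bound closes.
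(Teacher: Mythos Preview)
Your approach is correct and closely related to the paper's, but the mechanism for passing from pointwise to uniform control is genuinely different. The paper does \emph{not} discretize $\xi$. Instead, it fixes $\theta$, locates a maximizer $\xi_0$ of $|f_1(\cdot,\theta)|$, uses the same Lipschitz bound $|\partial_\xi f_1|\le N+1$ you invoke to produce an interval $I(\theta)$ of length $\tfrac{1}{2(N+1)}$ on which $|f_1|\ge M_1(\theta)/2$, and then integrates the exponential moment over \emph{both} variables:
\[
\frac{1}{2(N+1)}\int_{(\R/\Z)^N} e^{\lambda M_1(\theta)/2}\,d\theta
\;\le\; \int_{(\R/\Z)^N}\!\int_0^{\pi N}\bigl(e^{\lambda f_1}+e^{-\lambda f_1}\bigr)\,d\xi\,d\theta
\;\le\; 2\pi N\, e^{N\lambda^2/4},
\]
using the single-variable MGF bound $\int_0^1 e^{a\sin(b+2\pi y)}\,dy\le e^{a^2/4}$ pointwise in $\xi$. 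Markov's inequality on this integral gives a set of $\theta$ of measure $\ge 3/4$ with $M_1(\theta)$ small; the same for the cosine sum $M_2$; intersecting and optimizing $\lambda=\sqrt{8\ln(8N(N+1))/N}$ yields \eqref{Gmax}. This is the argument of Remling \cite{Remduke} (ultimately Kahane \cite{prob}).

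What each buys: your $\varepsilon$-net plus Hoeffding plus union bound is entirely elementary and makes transparent why the factor $\ln(8(N+1)N)$ appears (it is $\ln M$ for the net). The paper's integration trick avoids ever choosing a net or tracking a discretization error, which makes the constants fall out more cleanly and is slightly more robust (no additive $O(1)$ Lipschitz remainder to absorb); on the other hand it requires the less familiar ``interval around the maximizer'' device. Both rely on exactly the same three inputs---period $\pi N$, derivative bound $N+1$, and subgaussian concentration of the random phase sum---so the conceptual content is the same.
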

\begin{proof}
Let
\begin{equation}\label{GdefE}
  \tilde{E}_j=\frac{j}{N}, \text{ for } j=1,2,\cdots,N.
\end{equation}
It suffices to show
there  exist $\theta=(\theta_1,\theta_2,\cdots,\theta_N)\in[0,1)^N$ such that  for any $\xi>0$,
\begin{equation}\label{Gmax1}
f(\xi,\theta) \equiv |\sum_{j=1}^N \sin(2 \tilde{E}_j\xi+2\pi\theta_j)|+|\sum_{j=1}^N\cos(2  \tilde{E}_j\xi+2\pi\theta_j)| \leq4\sqrt{2N\ln (8(N+1)N)}.
\end{equation}
Let
\begin{equation*}
  f_1(\xi,\theta)=\sum_{j=1}^N \sin(2 \tilde{E}_j\xi+2\pi\theta_j),
\end{equation*}
and
\begin{equation*}
  M_1(\theta)=\sup_{\xi\in \R^+}|f_1(\xi,\theta)|=\max_{0\leq \xi \leq \pi N}|f_1(\xi,\theta)|.
\end{equation*}
Pick $\xi_0$ so that $|f_1(\xi_0,\theta)|= M_1(\theta)$. Notice that
\begin{equation*}
  1=\frac{2}{\pi N}\int_{0}^{\pi N}f_1(\xi,\theta)\sin(2 \tilde{E}_j\xi+2\pi\theta_j)d\xi\leq 2M_1(\theta),
\end{equation*}
and
\begin{eqnarray*}
  \left|\frac{d f_{1}(\xi,\theta)}{d\xi}\right| &=& 2|\sum_{j=1}^N \tilde{E}_j\cos(2 \tilde{E}_j\xi+\theta_j) | \\
   &\leq& 2|\sum_{j=1}^N \tilde{E}_j| \\
   &=&  N+1.
\end{eqnarray*}
It follows that there exits a interval $I(\theta) $ of $\xi$ centered at $\xi_0$  with size $ \frac{1}{2 (N+1)}$ ($I(\theta)$ means the interval depends on $\theta$) such that
\begin{equation*}
 | f_1(\xi,\theta)|\geq M_1(
 \theta)-\frac{1}{4(N+1)}(N+1)\geq \frac{M_1(\theta)}{2},
\end{equation*}
for all $\xi\in I(\theta)$.

Integration yields  (p.493, \cite{Remduke})
 \begin{equation*}
    \int_0^{2\pi}e^{a\sin(b+2\pi y)}dy\leq e^{\frac{a^2}{4}}.
 \end{equation*}
Thus
\begin{eqnarray*}
 \frac{1}{2(N+1)} \int_{(\R/  \Z)^N}e^{\frac{1}{2}\lambda M_1(\theta)}d\theta &\leq &   \int_{(\R/ \Z)^N}  d\theta\int_{I(\theta)}(e^{ \lambda f_1(\xi,\theta)} +e^{- \lambda f_1(\xi,\theta)})d\xi\\
   &\leq&   \int_{(\R/\Z)^N}  d\theta\int_{0}^{\pi N}(e^{ \lambda f_1(\xi,\theta)} +e^{- \lambda f_1(\xi,\theta)})d\xi \\
   &= &  \int_{0}^{\pi N} d\xi\int_{(\R/\Z)^N} (e^{ \lambda f_1(\xi,\theta)} +e^{- \lambda f_1(\xi,\theta)}) d\theta\\
    &\leq &  \int_{0}^{\pi N}  2 e^{\frac{N}{4}\lambda^2}d\xi\\
     &= &     2\pi N e^{\frac{N}{4}\lambda^2}.
\end{eqnarray*}
It implies
\begin{equation}\label{Gaug221}
   \int_{(\R/\Z)^N}e^{\frac{1}{2}\lambda (M_1(\theta)-\frac{N}{2}\lambda-\frac{2}{\lambda}\ln 8N(N+1)-\frac{2}{\lambda}\ln2\pi)}d\theta \leq \frac{1}{4}.
\end{equation}
By \eqref{Gaug221}, we have for any $\lambda>0$, there exists a subset $S_1(\lambda)\subset [0,1)^{N}$ such that ${\rm Leb}(S_1)\geq \frac{3}{4}$ and
\begin{equation*}
  M_1(\theta)\leq \frac{N}{2}\lambda+\frac{2}{\lambda}\ln 8N(N+1))+\frac{2}{\lambda}\ln2\pi,
\end{equation*}
for all $\theta\in S_1$.

Let
\begin{equation*}
  f_2(\xi,\theta)=\sum_{j=1}^N \cos(2 \tilde{E}_j\xi+2\pi\theta_j),
\end{equation*}
and
\begin{equation*}
  M_2(\theta)=\sup_{\xi\in \R^+}|f_1(\xi,\theta)|=\max_{0\leq \xi \leq \pi N}|f_2(\xi,\theta)|.
\end{equation*}
Similarly,  for any $\lambda>0$, there exists a subset $S_2(\lambda)\subset [0,1)^{N}$ such that ${\rm Leb}(S_2)\geq \frac{3}{4}$ and
\begin{eqnarray*}
  M_2(\theta) &\leq&\frac{N}{2}\lambda+\frac{2}{\lambda}\ln 8N(N+1))+\frac{2}{\lambda}\ln 2\pi \\
   &\leq& M_2(\theta)\leq\frac{N}{2}\lambda+\frac{4}{\lambda}\ln 8N(N+1)).
\end{eqnarray*}
for all $\theta\in S_2$.

Let
\begin{equation*}
  \lambda=\sqrt{\frac{8\ln 8N(N+1)}{N}},
\end{equation*}
and  $S=S_1(\lambda)\cap S_2(\lambda)$.
Then we have ${\rm Leb}(S)\geq \frac{1}{4}$, and
\begin{equation*}
 M_1(\theta)+M_2(\theta)\leq  4\sqrt{2N\ln 8N(N+1)},
\end{equation*}
for all $\theta\in S.$ We finish the proof.
\end{proof}
\begin{proof}[\bf Proof of Theorem \ref{Mthm1}]
It suffices to prove the case that  $N\geq 2$.
For any given $N\geq 2$,
let
\begin{equation*}
  E_j=\frac{j}{N\tau}, \text{ for } j=1,2,\cdots,N.
\end{equation*}
Let $a>\frac{2-\alpha}{2(2+\alpha)}$.
By Lemma \ref{Lemapr30},
 there exist  $t_j\in[0,\pi)$ such that  for any $\xi>0$,
\begin{equation}\label{Gmaxmay}
  |\sum_{j=1}^N\sin(2\tau E_j\xi+2t_j)|+|\sum_{j=1}^N\cos(2\tau E_j\xi+2t_j)|\leq 4\sqrt{2N\ln (8(N+1)N)}.
\end{equation}
Let
\begin{equation}\label{Gdef.Vmay}
 V(\xi)=\frac{4 a}{\xi}\sum_{j=1}^N\sin\left(2\int_{0}^{\xi}\sqrt{1-\beta_{E_j}(x)}dx+2t_j\right).
\end{equation}
By Tayor series, one has
\begin{eqnarray}
  \int_{0}^{\xi}\sqrt{1-\beta_{E_j}(x)}dx &=& \int_{0}^{\xi}\left( 1-\frac{1}{2}\beta_{E_j}(x)+O(1)\beta_{E_j}(x)^2\right)dx \nonumber\\
  &=& \xi+\tau E_j\xi^{\frac{2-\alpha}{2+\alpha}}+\frac{O(1)}{\xi^{\frac{4\alpha}{2+\alpha}-1}}+\tilde{t}_j\nonumber\\
   &=& \xi+\tau E_j\xi^{\frac{2-\alpha}{2+\alpha}}+O(\xi^{-\epsilon})+\tilde{t}_j,\label{Gaug22}
\end{eqnarray}
 since $\alpha>\frac{2}{3}$.

By \eqref{Gmaxmay}, \eqref{Gdef.Vmay} and \eqref{Gaug22},
 one has
 \begin{eqnarray}
   |\xi V(\xi)| &=& 4a|\sum_{j=1}^N\sin\left(2\int_{0}^{\xi}\sqrt{1-\beta_{E_j}(x)}dx+2t_j\right)| \nonumber\\
   &=&  4a|\sum_{j=1}^N\sin\left(2\xi+2\tau E_j\xi^{\frac{\alpha}{2+\alpha}}+2t_j+2\tilde{t}_j\right)|+O(\xi^{-\epsilon}) \nonumber\\
    &\leq& 4a|\sin(2\tau E_j\xi^{\frac{\alpha}{2+\alpha}}+2t_j+2\tilde{t}_j)|+4a|\cos(2\tau E_j\xi^{\frac{\alpha}{2+\alpha}}+2t_j+2\tilde{t}_j)|+O(\xi^{-\epsilon})\nonumber\\
     &\leq& 16a\sqrt{2N\ln (8(N+1)N)}+O(\xi^{-\epsilon})\nonumber\\
       &\leq& 96a\sqrt{N\ln N},\label{Gdef.Vmay21}
 \end{eqnarray}
 for large $\xi$.
Define  $q(x)$ on $[0,\infty)$ such that \eqref{Gvapr} holds for \eqref{Gdef.Vmay}.
Then by \eqref{Gdef.Vmay21},
 we have
 \begin{equation}\label{Gmay31}
   \xi^{1-\frac{\alpha}{2}}|q(\xi)|\leq  (2+\alpha)48a\sqrt{N\ln N}.
 \end{equation}
By Theorem \ref{Keythm1}, for any $E_j$, $j=1,2,\cdots,N$, \eqref{Gschxi} has a solution $\phi(\xi,E_j)$ satisfying
\begin{equation*}
    |\phi(\xi,E_j)|\leq 2\xi^{-a}
\end{equation*}
for large $\xi$.
By \eqref{GWei1},
\begin{equation}\label{Gmay33}
     p(\xi)|\phi(\xi,E_j)|^2\leq O(1)\xi^{-2a-\frac{2\alpha}{2+\alpha}}\leq O(1)\xi^{-1-\epsilon}.
\end{equation}
It implies  $\phi(\xi,E_j)\in L^2(\R ^+,p(\xi)d\xi)$ and then $u\in L^2(\R^+)$.   The Theorem follows immediately    by  applying  $a=\frac{2-\alpha}{2+\alpha}$  to \eqref{Gmay31}.

\end{proof}


 \section*{Acknowledgments}
   The research was supported by  NSF DMS-1700314 and NSF DMS-1401204.

\footnotesize

\end{document}